\newcommand{\DoubleLineVertex}{
\xy
{\ar@{=}^{k}_{i} (-6,4);(-0.12,0.12)};
{\ar@{=}^{j}_{k} (6,4);(0.12,0.12)};
{\ar@{=}^{i}_{j} (0,-7);(0,-0.2)};
\endxy
}
\newcommand{\DoubleLineHilbertSpace}{
\xy
{\ar@{=} (-6,9);(-0.12,5.12)};
{\ar@{=} (6,9);(0.12,5.12)};
{\ar@{=} (-6,-9);(-0.12,-5.12)};
{\ar@{=} (6,-9);(0.12,-5.12)};
{\ar@{=}^{i}_{j} (0,-4.88);(0,4.88)};
(0,5)*\frm<2pt>{*};
(0,-5)*\frm<2pt>{*}; 
{\ar@{=} (6,9);(11.88,5.12)};
{\ar@{=} (18,9);(12.12,5.12)};
{\ar@{=} (6,-9);(11.88,-5.12)};
{\ar@{=} (18,-9);(12.12,-5.12)};
{\ar@{=}^{k}_{l} (12,-4.88);(12,4.88)};
(12,5)*\frm<2pt>{*};
(12,-5)*\frm<2pt>{*};
(0,-5)*\frm<2pt>{*}; 
{\ar@{=} (18,9);(23.88,5.12)};
{\ar@{=} (30,9);(24.12,5.12)};
{\ar@{=} (18,-9);(23.88,-5.12)};
{\ar@{=} (30,-9);(24.12,-5.12)};
{\ar@{=}^{m}_{n} (24,-4.88);(24,4.88)};
(24,5)*\frm<2pt>{*};
(24,-5)*\frm<2pt>{*};
(6,9)*\frm<2pt>{*};
(6,-9)*\frm<2pt>{*};
(18,9)*\frm<2pt>{*};
(18,-9)*\frm<2pt>{*};
{\ar@{=} (6,9);(6,12)};
{\ar@{=} (6,-12);(6,-9)};
{\ar@{=} (18,9);(18,12)};
{\ar@{=} (18,-12);(18,-9)};
\endxy
}
\newcommand{\DoubleLineSpaceA}{
\xy
{\ar@{=} (-6,9);(-0.12,5.12)};
{\ar@{=}_{j_5} (5.86,8.86);(0.12,5.12)};
{\ar@{=} (-6,-9);(-0.12,-5.12)};
{\ar@{=} (5.86,-8.86);(0.12,-5.12)};
{\ar@{=}^{j_1} (0,-4.88);(0,4.88)};
{\ar@{=} (6.14,8.86);(11.86,5.14)};
{\ar@{=}_{j_6} (17.86,8.86);(12.14,5.14)};
{\ar@{=} (6.14,-8.86);(11.86,-5.14)};
{\ar@{=} (17.86,-8.86);(12.12,-5.14)};
{\ar@{=}^{j_2} (12,-4.86);(12,4.86)};
{\ar@{=}^{j_7} (18.2,8.8);(23.8,5.2)};
{\ar@{=} (30,9);(24.12,5.12)};
{\ar@{=} (18.13,-8.87);(23.88,-5.12)};
{\ar@{=} (30,-9);(24.12,-5.12)};
{\ar@{=}^{j_3}_{j_4} (24,-4.88);(24,4.88)};
{\ar@{=} (6,9.13);(6,12)};
{\ar@{=} (6,-11.87);(6,-9.14)};
{\ar@{=} (18,9.13);(18,12)};
{\ar@{=} (18,-12);(18,-9.14)};
\endxy
}
\newcommand{\TriangleBp}{
\xy
0;/r1cm/:;
(0.3,0.15)*{p};
{\aline 0;a(0)};
{\aline 0;a(60)};
{\aline 0;a(120)};
{\aline 0;a(180)};
{\aline 0;a(240)};
{\aline 0;a(300)};
{\aline_>{1} a(0);a(60)};
{\aline_>{2} a(60);a(120)};
{\aline_>{3} a(120);a(180)};
{\aline_>{4} a(180);a(240)};
{\aline_>{5} a(240);a(300)};
{\aline_>{6} a(300);a(0)};
\endxy
}
\newcommand{\TrivalentGraph}{
\xy
0;/r45mm/:;
(0.15,0.84)*{}="p1";
(0.21,0.76)*{}="p2";
(0.58,0.83)*{}="p3";
(0.64,0.90)*{}="p4";
(0.41,0.58)*{}="p5";
(0.73,0.72)*{}="p6";
(0.40,0.37)*{}="p7";
(0.27,0.30)*{}="p8";
(0.42,0.12)*{}="p9";
(0.56,0.20)*{}="p10";
(0.15,0.60)*{}="p11";
(0.81,0.75)*{}="p12";
(0.76,0.42)*{}="p13";
(0.07,0.56)*{}="p14";
(0.1,0.19)*{}="p15";
(0.36,0.08)*{}="p16";
(0.65,0.12)*{}="p17";
(0.84,0.40)*{}="p18";
(0.55,0.55)*{}="p19";
(0.68,0.62)*{}="p20";
(0.42,0.75)*{}="p21";
(0.59,0.30)*{}="p22";
(0.22,0.45)*{}="p23";
(0.20,0.20)*{}="p24";
"p16";"p9" **\dir{-} ?(0.6)*\dir{>}+(0.03,-0.01) *{\scriptstyle j_{1}};
"p17";"p10" **\dir{-} ?(0.6)*\dir{>}+(0.05,0) *{\scriptstyle j_{2}};
"p15";"p24" **\dir{-} ?(0.6)*\dir{>}+(-0.04,0.04) *{\scriptstyle j_{3}};
"p14";"p11" **\dir{-} ?(0.6)*\dir{>}+(-0.03,0.03) *{\scriptstyle j_{4}};
"p2";"p1" **\dir{-} ?(0.6)*\dir{>}+(0.05,0) *{\scriptstyle j_{5}};
"p3";"p4" **\dir{-} ?(0.6)*\dir{>}+(0.04,0) *{\scriptstyle j_{6}};
"p6";"p12" **\dir{-} ?(0.6)*\dir{>}+(0.02,-0.03) *{\scriptstyle j_{7}};
"p18";"p13" **\dir{-} ?(0.6)*\dir{>}+(0.02,0.04) *{\scriptstyle j_{8}};
"p9";"p10" **\dir{-} ?(0.6)*\dir{>}+(-0.02,0.03) *{\scriptstyle j_{9}};
"p22";"p13" **\dir{-} ?(0.6)*\dir{>}+(0.01,-0.04) *{\scriptstyle j_{10}};
"p13";"p20" **\dir{-} ?(0.6)*\dir{>}+(0.04,0) *{\scriptstyle j_{11}};
"p6";"p3" **\dir{-} ?(0.6)*\dir{>}+(0.06,0) *{\scriptstyle j_{12}};
"p9";"p24" **\dir{-} ?(0.6)*\dir{>}+(0,-0.04) *{\scriptstyle j_{13}};
"p8";"p23" **\dir{-} ?(0.6)*\dir{>}+(-0.04,0) *{\scriptstyle j_{14}};
"p11";"p2" **\dir{-} ?(0.6)*\dir{>}+(-0.04,0) *{\scriptstyle j_{15}};
"p21";"p3" **\dir{-} ?(0.6)*\dir{>}+(-0.02,0.04) *{\scriptstyle j_{16}};
"p22";"p7" **\dir{-} ?(0.6)*\dir{>}+(0.03,0.03) *{\scriptstyle j_{17}};
"p8";"p7" **\dir{-} ?(0.6)*\dir{>}+(0.01,-0.03) *{\scriptstyle j_{18}};
"p7";"p19" **\dir{-} ?(0.6)*\dir{>}+(0.04,-0.01) *{\scriptstyle j_{19}};
"p5";"p19" **\dir{-} ?(0.6)*\dir{>}+(0.01,0.04) *{\scriptstyle j_{20}};
"p23";"p5" **\dir{-} ?(0.6)*\dir{>}+(0.03,-0.03) *{\scriptstyle j_{21}};
"p5";"p21" **\dir{-} ?(0.6)*\dir{>}+(-0.04,0) *{\scriptstyle j_{22}};
"p19";"p20" **\dir{-} ?(0.6)*\dir{>}+(0.01,-0.04) *{\scriptstyle j_{23}};
"p20";"p6" **\dir{-} ?(0.6)*\dir{>}+(0.03,-0.02) *{\scriptstyle j_{24}};
"p21";"p2" **\dir{-} ?(0.6)*\dir{>}+(0.02,0.04) *{\scriptstyle j_{25}};
"p10";"p22" **\dir{-} ?(0.6)*\dir{>}+(0.04,-0.01) *{\scriptstyle j_{26}};
"p23";"p11" **\dir{-} ?(0.6)*\dir{>}+(0.04,0.02) *{\scriptstyle j_{27}};
"p24";"p8" **\dir{-} ?(0.6)*\dir{>}+(0.03,-0.02) *{\scriptstyle j_{28}};
  \endxy
  }
\newcommand{\TrivalentGraphII}{
\xy
0;/r45mm/:;
(0.15,0.84)*{}="p1";
(0.21,0.76)*{}="p2";
(0.58,0.83)*{}="p3";
(0.64,0.90)*{}="p4";
(0.41,0.58)*{}="p5";
(0.73,0.72)*{}="p6";
(0.40,0.37)*{}="p7";
(0.27,0.30)*{}="p8";
(0.42,0.12)*{}="p9";
(0.56,0.20)*{}="p10";
(0.15,0.60)*{}="p11";
(0.81,0.75)*{}="p12";
(0.76,0.42)*{}="p13";
(0.07,0.56)*{}="p14";
(0.1,0.19)*{}="p15";
(0.36,0.08)*{}="p16";
(0.65,0.12)*{}="p17";
(0.84,0.40)*{}="p18";
(0.55,0.55)*{}="p19";
(0.68,0.62)*{}="p20";
(0.42,0.75)*{}="p21";
(0.59,0.30)*{}="p22";
(0.22,0.45)*{}="p23";
(0.20,0.20)*{}="p24";
"p16";"p9" **\dir{-} ?(0.6)*\dir{>}+(0.03,-0.01) *{\scriptstyle j_{1}};
"p17";"p10" **\dir{-} ?(0.6)*\dir{>}+(0.05,0) *{\scriptstyle j_{2}};
"p15";"p24" **\dir{-} ?(0.6)*\dir{<}+(-0.04,0.04) *{\scriptstyle j^*_{3}};
"p14";"p11" **\dir{-} ?(0.6)*\dir{<}+(-0.03,0.03) *{\scriptstyle j^*_{4}};
"p2";"p1" **\dir{-} ?(0.6)*\dir{>}+(0.05,0) *{\scriptstyle j_{5}};
"p3";"p4" **\dir{-} ?(0.6)*\dir{>}+(0.04,0) *{\scriptstyle j_{6}};
"p6";"p12" **\dir{-} ?(0.6)*\dir{>}+(0.02,-0.03) *{\scriptstyle j_{7}};
"p18";"p13" **\dir{-} ?(0.6)*\dir{<}+(0.02,0.04) *{\scriptstyle j^*_{8}};
"p9";"p10" **\dir{-} ?(0.6)*\dir{<}+(-0.02,0.03) *{\scriptstyle j^*_{9}};
"p22";"p13" **\dir{-} ?(0.6)*\dir{<}+(0.01,-0.04) *{\scriptstyle j^*_{10}};
"p13";"p20" **\dir{-} ?(0.6)*\dir{>}+(0.04,0) *{\scriptstyle j_{11}};
"p6";"p3" **\dir{-} ?(0.6)*\dir{>}+(0.06,0) *{\scriptstyle j_{12}};
"p9";"p24" **\dir{-} ?(0.6)*\dir{<}+(0,-0.04) *{\scriptstyle j^*_{13}};
"p8";"p23" **\dir{-} ?(0.6)*\dir{<}+(-0.04,0) *{\scriptstyle j^*_{14}};
"p11";"p2" **\dir{-} ?(0.6)*\dir{>}+(-0.04,0) *{\scriptstyle j_{15}};
"p21";"p3" **\dir{-} ?(0.6)*\dir{>}+(-0.02,0.04) *{\scriptstyle j_{16}};
"p22";"p7" **\dir{-} ?(0.6)*\dir{>}+(0.03,0.03) *{\scriptstyle j_{17}};
"p8";"p7" **\dir{-} ?(0.6)*\dir{<}+(0.01,-0.03) *{\scriptstyle j^*_{18}};
"p7";"p19" **\dir{-} ?(0.6)*\dir{<}+(0.04,-0.01) *{\scriptstyle j^*_{19}};
"p5";"p19" **\dir{-} ?(0.6)*\dir{<}+(0.01,0.04) *{\scriptstyle j^*_{20}};
"p23";"p5" **\dir{-} ?(0.6)*\dir{>}+(0.03,-0.03) *{\scriptstyle j_{21}};
"p5";"p21" **\dir{-} ?(0.6)*\dir{>}+(-0.04,0) *{\scriptstyle j_{22}};
"p19";"p20" **\dir{-} ?(0.6)*\dir{<}+(0.01,-0.04) *{\scriptstyle j^*_{23}};
"p20";"p6" **\dir{-} ?(0.6)*\dir{<}+(0.03,-0.02) *{\scriptstyle j^*_{24}};
"p21";"p2" **\dir{-} ?(0.6)*\dir{>}+(0.02,0.04) *{\scriptstyle j_{25}};
"p10";"p22" **\dir{-} ?(0.6)*\dir{>}+(0.04,-0.01) *{\scriptstyle j_{26}};
"p23";"p11" **\dir{-} ?(0.6)*\dir{>}+(0.04,0.02) *{\scriptstyle j_{27}};
"p24";"p8" **\dir{-} ?(0.6)*\dir{<}+(0.03,-0.02) *{\scriptstyle j^*_{28}};
  \endxy
  }
  \newcommand{\Ypart}[6]{
  \bmm\xy
  0;/r0.15pc/:;
  (-10,10)*{}="p1";
  (10,10)*{}="p2";
  (0,2)*{}="p3";
  (0,-10)*{}="p4";
  "p4";"p3" **\dir{-} ?(0.5)*\dir{#4}+(4,0) *{\scriptstyle #1};
  "p3";"p2" **\dir{-} ?(0.6)*\dir{#6}+(2,-3) *{\scriptstyle #3};
  "p3";"p1" **\dir{-} ?(0.6)*\dir{#5}+(-1,-3) *{\scriptstyle #2};
  \endxy\emm
  }
\newcommand{\TriangleYpart}[6]{
  \def\jjA{#1}
  \def\jjB{#2}
  \def\jjC{#3}
  \def\jjD{#4}
  \def\jjE{#5}
  \def\jjF{#6}
  \TriangleYpartExtended
  }
\newcommand{\TriangleYpartExtended}[6]{
  \bmm\xy
  0;/r0.15pc/:;
  (-10,10)*{}="p1";
  (10,10)*{}="p2";
  (0,-10)*{}="p3";
  (0,1.75)*{p};
  (-4,5)*{}="p4";
  (4,5)*{}="p5";
  (0,-3)*{}="p6";
  "p4";"p1" **\dir{-} ?(0.6)*\dir{#5}+(-3,-1) *{\scriptstyle \jjE};
  "p5";"p2" **\dir{-} ?(0.6)*\dir{#6}+(3.5,-1) *{\scriptstyle \jjF};
  "p3";"p6" **\dir{-} ?(0.6)*\dir{#1}+(3,-1) *{\scriptstyle \jjA};
  "p4";"p5" **\dir{-} ?(0.5)*\dir{#4}+(0,3.5) *{\scriptstyle \jjD};
  "p6";"p5" **\dir{-} ?(0.5)*\dir{#3}+(3.5,0) *{\scriptstyle \jjC};
  "p6";"p4" **\dir{-} ?(0.5)*\dir{#2}+(-3,0) *{\scriptstyle \jjB};
  \endxy\emm
}
\newcommand{\Bipartition}{
\renewcommand\latticebody
{\drop{{\ar@{-} c+<-0.866025pc,-0.5pc>;c}
		{\ar@{-} c+<0.866025pc,-0.5pc>;c}{\ar@{-} c;c+<0pc,1pc>}}}
\xy
(0,0)*{\xy
	0;<0.866025pc,0pc>:<0.433013pc,0.75pc>::;
	\croplattice{-5}5{-5}{5}%
	{-3}{3}{-3}{2}
	\endxy};
(0,-12.7)*{}="dd";
{\ar@{-} (-22,5.9);(-25.6,7.8)};
{\ar@{-} (-25.6,7.8);(-25.6,12.1)};
{\ar@{-} (-22,5.9)+"dd";(-25.6,7.8)+"dd"};
{\ar@{-} (-25.6,7.8)+"dd";(-25.6,12.1)+"dd"};
{\ar@{-} (-22,5.9)+"dd"+"dd";(-25.6,7.8)+"dd"+"dd"};
{\ar@{-} (-25.6,7.8)+"dd"+"dd";(-25.6,12.1)+"dd"+"dd"};
{\ar@{-} (22,5.9);(25.6,7.8)};
{\ar@{-} (25.6,7.8);(25.6,12.1)};
{\ar@{-} (22,5.9)+"dd";(25.6,7.8)+"dd"};
{\ar@{-} (25.6,7.8)+"dd";(25.6,12.1)+"dd"};
{\ar@{-} (22,5.9)+"dd"+"dd";(25.6,7.8)+"dd"+"dd"};
{\ar@{-} (25.6,7.8)+"dd"+"dd";(25.6,12.1)+"dd"+"dd"};
(30,0)*{};(-4,-2.5)*{},{\ellipse(17,13.5){--}},
(0,-3)*{A};
(18,-8)*{B};
\endxy
}
\newcommand{\DoubleLineGroundState}{
\renewcommand\latticebody
{\drop{{\ar@{=} c+<-0.82pc,-0.47pc>;c+<-0.03pc,-0.017pc>}
		{\ar@{=} c+<0.82pc,-0.47pc>;c+<0.03pc,-0.017pc>}{\ar@{=} c+<0pc,0.04pc>;c+<0pc,0.96pc>}}}
\xy
(0,0)*{\xy
	0;<0.866025pc,0pc>:<0.433013pc,0.75pc>::;
	\croplattice{-5}5{-5}{5}%
	{-3}{3}{-3}{2}
	\endxy};
(0,-12.7)*{}="dd";
{\ar@{=} (-22.2,6);(-25.4,7.7)};
{\ar@{=} (-25.6,7.9);(-25.6,12.)};
{\ar@{=} (-22.2,6)+"dd";(-25.4,7.7)+"dd"};
{\ar@{=} (-25.6,7.9)+"dd";(-25.6,12.)+"dd"};
{\ar@{=} (-22.2,6)+"dd"+"dd";(-25.4,7.7)+"dd"+"dd"};
{\ar@{=} (-25.6,7.9)+"dd"+"dd";(-25.6,12.)+"dd"+"dd"};
{\ar@{=} (22.2,6);(25.4,7.7)};
{\ar@{=} (25.6,7.9);(25.6,12.)};
{\ar@{=} (22.2,6)+"dd";(25.4,7.7)+"dd"};
{\ar@{=} (25.6,7.9)+"dd";(25.6,12.)+"dd"};
{\ar@{=} (22.2,6)+"dd"+"dd";(25.4,7.7)+"dd"+"dd"};
{\ar@{=} (25.6,7.9)+"dd"+"dd";(25.6,12.)+"dd"+"dd"};
(30,0)*{};(-4,-2.5)*{},{\ellipse(17,13.5){--}},
(-17,3)*{\scriptstyle \alpha_1};
(-15,10)*{\scriptstyle \alpha_2};
(-7,9)*{\scriptstyle \alpha_3};
(-23,-3)*{\scriptstyle \alpha_l};
(-18,-10)*{\scriptscriptstyle \alpha_{l-1}};
\endxy
}
\newcommand{\DiskWithBoundary}{
\renewcommand\latticebody
{\drop{{\ar@{-} c+<-0.866025pc,-0.5pc>;c}
		{\ar@{-} c+<0.866025pc,-0.5pc>;c}{\ar@{-} c;c+<0pc,1pc>}}}
\xy
(0,0)*{\xy
	0;<0.866025pc,0pc>:<0.433013pc,0.75pc>::;
	\croplattice{-5}5{-5}{5}%
	{-3}{3}{-3}{2}
	\endxy};
(0,-12.7)*{}="dd";
(7.333,0)*{}="rr";
{\ar@{-} (-22,5.9);(-25.6,7.8)};
{\ar@{-} (-25.6,7.8);(-25.6,12.1)};
{\ar@{-} (-22,5.9)+"dd";(-25.6,7.8)+"dd"};
{\ar@{-} (-25.6,7.8)+"dd";(-25.6,12.1)+"dd"};
{\ar@{-} (-22,5.9)+"dd"+"dd";(-25.6,7.8)+"dd"+"dd"};
{\ar@{-} (-25.6,7.8)+"dd"+"dd";(-25.6,12.1)+"dd"+"dd"};
{\ar@{-} (22,5.9);(25.6,7.8)};
{\ar@{-} (25.6,7.8);(25.6,12.1)};
{\ar@{-} (22,5.9)+"dd";(25.6,7.8)+"dd"};
{\ar@{-} (25.6,7.8)+"dd";(25.6,12.1)+"dd"};
{\ar@{-} (22,5.9)+"dd"+"dd";(25.6,7.8)+"dd"+"dd"};
{\ar@{-} (25.6,7.8)+"dd"+"dd";(25.6,12.1)+"dd"+"dd"};
{\ar@{-} (-22,18.4);(-18.333,20.52)};
{\ar@{-} (-14.667,18.4);(-18.333,20.52)};
{\ar@{-} (-22,18.4)+"rr";(-18.333,20.52)+"rr"};
{\ar@{-} (-14.667,18.4)+"rr";(-18.333,20.52)+"rr"};
{\ar@{-} (-22,18.4)+"rr"+"rr";(-18.333,20.52)+"rr"+"rr"};
{\ar@{-} (-14.667,18.4)+"rr"+"rr";(-18.333,20.52)+"rr"+"rr"};
{\ar@{-} (-22,18.4)+"rr"+"rr"+"rr";(-18.333,20.52)+"rr"+"rr"+"rr"};
{\ar@{-} (-14.667,18.4)+"rr"+"rr"+"rr";(-18.333,20.52)+"rr"+"rr"+"rr"};
{\ar@{-} (-22,18.4)+"rr"+"rr"+"rr"+"rr";(-18.333,20.52)+"rr"+"rr"+"rr"+"rr"};
{\ar@{-} (-14.667,18.4)+"rr"+"rr"+"rr"+"rr";(-18.333,20.52)+"rr"+"rr"+"rr"+"rr"};
{\ar@{-} (-22,18.4)+"rr"+"rr"+"rr"+"rr"+"rr";(-18.333,20.52)+"rr"+"rr"+"rr"+"rr"+"rr"};
{\ar@{-} (-14.667,18.4)+"rr"+"rr"+"rr"+"rr"+"rr";(-18.333,20.52)+"rr"+"rr"+"rr"+"rr"+"rr"};
\endxy
}
\newcommand{\DoubleLineDisk}{
\renewcommand\latticebody
{\drop{{\ar@{=} c+<-0.82pc,-0.47pc>;c+<-0.03pc,-0.017pc>}
		{\ar@{=} c+<0.82pc,-0.47pc>;c+<0.03pc,-0.017pc>}{\ar@{=} c+<0pc,0.04pc>;c+<0pc,0.96pc>}}}
\xy
(0,0)*{\xy
	0;<0.866025pc,0pc>:<0.433013pc,0.75pc>::;
	\croplattice{-5}5{-5}{5}%
	{-3}{3}{-3}{2}
	\endxy};
(0,-12.67)*{}="dd";
(0,12.67)*{}="uu";
(7.318,0)*{}="rr";
(-7.318,0)*{}="ll";
(0,-0.4)*{}="eedd";
{\ar@{=} (-22.2,5.96);(-25.4,7.7)};
{\ar@{=} (-25.6,7.9);(-25.6,12.)};
{\ar@{=} (-22.2,5.96)+"dd";(-25.4,7.7)+"dd"};
{\ar@{=} (-25.6,7.9)+"dd";(-25.6,12.)+"dd"};
{\ar@{=} (-22.2,5.96)+"dd"+"dd";(-25.4,7.7)+"dd"+"dd"};
{\ar@{=} (-25.6,7.9)+"dd"+"dd";(-25.6,12.)+"dd"+"dd"};
{\ar@{=} (22.2,5.96);(25.4,7.7)};
{\ar@{=} (25.6,7.9);(25.6,12.)};
{\ar@{=} (22.2,5.96)+"dd";(25.4,7.7)+"dd"};
{\ar@{=} (25.6,7.9)+"dd";(25.6,12.)+"dd"};
{\ar@{=} (22.2,5.96)+"dd"+"dd";(25.4,7.7)+"dd"+"dd"};
{\ar@{=} (25.6,7.9)+"dd"+"dd";(25.6,12.)+"dd"+"dd"};
{\ar@{=} (-21.8,18.5);(-18.533,20.52)};
{\ar@{=} (-14.767,18.5);(-18.133,20.52)};
{\ar@{=} (-21.8,18.5)+"rr";(-18.533,20.52)+"rr"};
{\ar@{=} (-14.767,18.5)+"rr";(-18.133,20.52)+"rr"};
{\ar@{=} (-21.8,18.5)+"rr"+"rr";(-18.533,20.52)+"rr"+"rr"};
{\ar@{=} (-14.767,18.5)+"rr"+"rr";(-18.133,20.52)+"rr"+"rr"};
{\ar@{=} (-21.8,18.5)+"rr"+"rr"+"rr";(-18.533,20.52)+"rr"+"rr"+"rr"};
{\ar@{=} (-14.767,18.5)+"rr"+"rr"+"rr";(-18.133,20.52)+"rr"+"rr"+"rr"};
{\ar@{=} (-21.8,18.5)+"rr"+"rr"+"rr"+"rr";(-18.533,20.52)+"rr"+"rr"+"rr"+"rr"};
{\ar@{=} (-14.767,18.5)+"rr"+"rr"+"rr"+"rr";(-18.133,20.52)+"rr"+"rr"+"rr"+"rr"};
{\ar@{=} (-21.8,18.5)+"rr"+"rr"+"rr"+"rr"+"rr";(-18.533,20.52)+"rr"+"rr"+"rr"+"rr"+"rr"};
{\ar@{=} (-14.767,18.5)+"rr"+"rr"+"rr"+"rr"+"rr";(-18.133,20.52)+"rr"+"rr"+"rr"+"rr"+"rr"};
@={ (-18.35,21.1)="oo",(-14.65,18.8)="pp", "oo"+"rr","pp"+"rr",
	"oo"+"rr"+"rr","pp"+"rr"+"rr",
	"oo"+"rr"+"rr"+"rr","pp"+"rr"+"rr"+"rr",
	"oo"+"rr"+"rr"+"rr"+"rr","pp"+"rr"+"rr"+"rr"+"rr",
	"oo"+"rr"+"rr"+"rr"+"rr"+"rr",(22.3,18.6)="o1",
	(22.3,14.4)="o2",(25.9,12.35)="o3",(25.9,7.6)="o4",
	"o1"+"eedd"+"dd","o2"+"dd","o3"+"dd","o4"+"dd",
	"o1"+"eedd"+"dd"+"dd","o2"+"dd"+"dd","o3"+"dd"+"dd","o4"+"dd"+"dd",
	(21.95,-19.95)="b1",(18.4,-17.8)="b2",
	"b1"+"ll","b2"+"ll",
	"b1"+"ll"+"ll","b2"+"ll"+"ll",
	"b1"+"ll"+"ll"+"ll","b2"+"ll"+"ll"+"ll",
	"b1"+"ll"+"ll"+"ll"+"ll","b2"+"ll"+"ll"+"ll"+"ll",
	"b1"+"ll"+"ll"+"ll"+"ll"+"ll","b2"+"ll"+"ll"+"ll"+"ll"+"ll",
	"b1"+"ll"+"ll"+"ll"+"ll"+"ll"+"ll",(-25.95,-17.8)="a1",
	(-25.95,-13)="a2",(-22.2,-10.8)="a3",(-22.2,-7.1)="a4",
	"a1"+"uu","a2"+"uu","a3"+"uu","a4"+"uu",
	"a1"+"uu"+"uu","a2"+"uu"+"uu","a3"+"uu"+"uu",(-22.3,18.6)};
s0="prev";
@@{;"prev";**@{-}="prev"};
\endxy
} 
\newcommand{\aline}{\ar@{-}}
\newcommand{\arl}{\ar@{-}|@{>}}
\newcommand{\arr}{\ar@{-}|@{<}}
\newcommand{\rmd}{\mathrm{d}}
\newcommand{\rmw}{\mathrm{w}}
\newcommand{\rmv}{\mathrm{v}}
\newcommand{\dsI}{\mathbf{1}}
\newcommand{\dsC}{\mathbb{C}}
\newcommand{\mcc}{\mathcal{C}}
\newcommand{\mcb}{\mathcal{B}}
\newcommand{\dsR}{\mathbb{R}}
\newcommand{\cM}{\mathcal{M}}
\newcommand{\scH}{\mathcal{L}}
\newcommand{\ket}[1]{\left|{#1}\right\rangle}
\newcommand{\bra}[1]{\left\langle{#1}\right|}
\newcommand{\bpm}{\begin{pmatrix}}
\newcommand{\epm}{\end{pmatrix}}
\newcommand{\bmm}{\begin{matrix}}
\newcommand{\emm}{\end{matrix}}
\newcommand{\mat}[2][1]{%
  \scalebox{#1}{%
    \renewcommand{\arraystretch}{1}%
    $\begin{pmatrix}#2\end{pmatrix}$%
  }
}
\newcommand{\M}{\mathcal{M}}
\newcommand{\V}{\mathcal{V}}
\newcommand{\Z}{\mathbb{Z}}
\newcommand{\N}{\mathbb{N}}
\def\Cen{\mathcal{Z}}
\newcommand{\two}{\mathbf{2}}
\newcommand{\C}{\mathbb C}
\numberwithin{equation}{section}
\newtheorem{theorem}{Theorem}[section]
\newtheorem{conj}[theorem]{Conjecture}
\newtheorem{prop}[theorem]{Proposition}
\theoremstyle{definition}
\newtheorem{remark}[theorem]{Remark}
\newtheorem{example}[theorem]{Example}
\newtheorem{definition}[theorem]{Definition}
\begin{document}
\title[Symmetry Enriched Topological Phases]
{On Enriching the Levin-Wen model with Symmetry}

\author{Liang Chang$^{1}$, Meng Cheng$^{2}$, Shawn X. Cui$^{3}$, Yuting Hu$^{4}$, Wei Jin$^{5}$, Ramis Movassagh$^{6}$, Pieter Naaijkens$^{7}$, Zhenghan Wang$^{2,3}$, and Amanda Young$^{8}$}
\address{$^1$Department of Mathematics\\Texas A$\&$M University\\College Station, Texas 77843-1224\\USA}
\email{liangchang@math.tamu.edu}
\address{$^2$Microsoft Research, Station Q\\ UC at Santa Barbara, CA 93106\\USA}
\email{mcheng@microsoft.com, zhenghwa@microsoft.com}
\address{$^3$Department of Mathematics\\UC at Santa Barbara, CA 93106\\USA}
\email{xingshan@math.ucsb.edu, zhenghwa@math.ucsb.edu}
\address{$^4$Department of Physics and Astronomy\\University of Utah\\Salt Lake City, UT 84112\\USA}
\email{yuting@physics.utah.edu}
\address{$^5$School of Mathematical Sciences\\Peking University\\Beijing 100871\\China}
\email{weijin@math.ucsb.edu}
\address{$^6$Department of Mathematics\\Northeastern University\\Boston, MA 02115\\USA}
\email{ramis.mov@gmail.com}
\address{$^7$Institut f\"{u}r Theoretische Physik\\Leibniz Universit\"{a}t Hannover\\Germany}
\email{pieter.naaijkens@itp.uni-hannover.de}
\address{$^8$Department of Mathematics\\UC at Davis\\Davis, CA 95616\\USA}
\email{amyoung@math.ucdavis.edu}

\thanks{The authors thank AMS and NSF for sponsoring the 2014 Mathematical
Research Community on mathematics of quantum phases of matter and quantum
information in Snowbird, Utah, where this group project began.  The third and
eighth named authors are partially supported by NSF DMS 1108736.  The fifth
named author is supported by the China Scholarship Council. The seventh named
author is supported through NWO Rubicon grant 680-50-1118 and the EU project
QFTCMPS. The ninth named author is supported by NSF DMS 1009502.}

\keywords{Multi-fusion category, symmetry, topological phase of matter}

\begin{abstract}
Symmetry protected and symmetry enriched topological phases of matter are of
great interest in condensed matter physics due to new materials such as
topological insulators. The Levin-Wen model for spin/boson systems is an
important rigorously solvable model for studying $2D$ topological phases.  The
input data for the Levin-Wen model is a unitary fusion category, but the same
model also works for unitary multi-fusion categories.  In this paper, we provide
the details for this extension of the Levin-Wen model, and show that the
extended Levin-Wen model is a natural playground for the theoretical study of
symmetry protected and symmetry enriched topological phases of matter.

\end{abstract}

\maketitle

\section{Introduction}

Symmetry protected and symmetry enriched topological phases of matter are of
great interest in condensed matter physics due to new materials such as
topological insulators (see \cite{CGLW,BBCW} and references therein).  The Levin-Wen
(LW) model for spin/boson systems is an important rigorously solvable model for
studying $2D$ topological phases \cite{LW}.  The required input data for the LW
model is a unitary fusion category (UFC), but the same model works for unitary
multi-fusion categories.  In this paper, we provide several results for this
extension of the LW model, and show that the extended LW model is a natural
playground for the theoretical study of symmetry protected and symmetry enriched
topological phases of matter in two spatial dimensions.

The LW model is a Hamiltonian formulation of Turaev-Viro $(2+1)$-TQFTs.  Three
mathematical theorems underlie this beautiful model:  (1) given a UFC $\mcc$,
we can construct a Turaev-Viro unitary $(2+1)$-TQFT \cite{BW}, (2) the
Drinfeld center $\Cen(\mcc)$ or quantum double $D(\mcc)$ of a UFC $\mcc$ is
always modular \cite{Mu}, and (3) the Turaev-Viro $(2+1)$-TQFT based on
$\mcc$ is equivalent to the Reshetikhin-Turaev $(2+1)$-TQFT based on the center
$\Cen(\mcc)$ \cite{BK,TV}.  The algebraic model of anyons in the LW model with
input $\mcc$ is encoded by the modular category $\Cen(\mcc)$.

We conjecture that all three theorems above have appropriate extensions to
unitary multi-fusion categories.  Indeed the Drinfeld center $\Cen(\mcc)$ of an
indecomposable multi-fusion category $\mcc$ is modular, and a direct sum of
modular categories if $\mcc$ is decomposable.  Thus, we expect the Hilbert space
$V(S^2)$ of the $2$-sphere $S^2$ associated to a decomposable multi-fusion
category $\mcc$ has dimension $>1$.

There are several generalizations of the LW model, including to $3D$ and fermion
systems \cite{WW,GWW}.  The first appearance of a LW model using a unitary
multi-fusion category as input is given in Example $H$ of Section III
in \cite{LWYW}. While the extension of the LW model to unitary multi-fusion
categories as input is straightforward, the application of this extension to
symmetry protected and symmetry enriched topological phases of matter is new.

In $2D$, the anyon model of a topological phase of quantum matter is
algebraically modeled by a unitary modular category $\mcb$.  An exciting new
direction is the interplay between symmetry and topological order \cite{BBCW}.
But a microscopic physical theory based on local Hamiltonians is still lacking.
For topological phases such that $\mcb$ is a quantum double $\mcb=D(\mcc)$, the
LW model could provide such a microscopic theory. Specifically, given an input
$\mcc$ for the LW model, if the symmetry $G$ could be realized as unitary on-site
symmetries of the LW Hamiltonians, then the topological symmetry on $D(\mcc)$
should emerge from the $G$ symmetry of the Hamiltonians.  But even for the
electric-magnetic duality $e\leftrightarrow m$ of the toric code, a Hamiltonian
realization is not in the literature\footnote{Meng Cheng found an on-site
realization of the electric-magnetic duality in the toric code, but the details
have not been published.}.  Current realizations of the $e\leftrightarrow m$
duality need the dual lattice and lattice translation.

In the case of a multi-fusion category, group symmetries sometimes appear in a
natural way. For such a category it is natural to consider labels consisting of
two indices.  We may then endow the half-labels with a group structure $G$.
Then the solutions of pentagons are closely related to $G$-equivariant
$3$-cocycles, and extended LW Hamiltonians sometimes naturally come with a
$G$-symmetry, as we will see below.  This leads to an application of the LW model to 
symmetry protected and symmetry enriched topological phases.

The contents of the paper are as follows:  In Sec.~2, we provide some background
material on multi-fusion categories.  In Sec.~3, we give the detail of the
extension of the LW model to multi-fusion category inputs and prove that the
extended LW models with input $\cM_n$ all realize the trivial $(2+1)$-TQFT.  In
Sec.~4, we introduce group structures onto the half-label set of a multi-fusion
category and use such group structures to enrich the LW model with symmetries.
Finally, we de-equivariantize our $G$-symmetric LW models with a non-local
transformation that leads to traditional LW models coupled with a local group
action.

\section{Multi-fusion categories and their doubles}

All multi-fusion and modular categories in this paper are unitary over the
complex numbers $\mathbb{C}$.

\subsection{Multi-fusion category}\label{multi-fusion}

The tensor unit is required to be a simple object in a fusion category.  If we
allow the tensor unit to be not necessarily simple, we obtain multi-fusion
categories.  Therefore, a multi-fusion category is a finite semi-simple rigid
monoidal $\C$-linear category.  They arise naturally in mathematics and physics.
  For example, given a finite depth type $\Pi_1$ sub-factor $N\subset M$ in the
study of von Neummann algebras, the $N-N$, $N-M$, $M-N$, and $M-M$ bi-modules
form a Morita context, and can be regarded as a multi-fusion category.
Much of the fusion category theory naturally generalizes to the multi-fusion
case.

Given a multi-fusion category $\mcc$ with a tensor unit $\bf{1}$, the tensor
unit $\bf{1}$ decomposes into the sum of simple objects ${\bf{1}}\cong
\oplus_{i=1}^n {\bf{1}_i}$ for some $n$.
For a simple object $X$ of $\mcc$, there exists a unique pair $1\leq i,j\leq n$
such that ${\bf{1}_{i}} \otimes X\cong X\cong X\otimes {\bf{1}_j}$.  We will say
that $X$ is in the $(i,j)$-th component of $\mcc$. Let $\mcc_{ij}$ be the
abelian\footnote{Here we mean ``abelian'' as in the sense it is used in category
theory and homology theory, not as in abelian anyons.} sub-category of $\mcc$
generated by direct sums of all simple objects in the $(i,j)$-th component.  We
will call $\mcc_{ij}$ the $(i,j)$-th component of $\mcc$.  The diagonal
components $\mcc_{ii}$ are fusion categories and the off-diagonal components
$\mcc_{ij}, i\neq j,$ are $\mcc_{ii}$-$\mcc_{jj}$-bimodules.
We will call such a multi-fusion category an $n\times n$ multi-fusion category.
A $1\times 1$ multi-fusion category is just a fusion category.
A multi-fusion category is indecomposable if it is not the direct sum of two
non-zero multi-fusion categories.

\begin{definition}

An $n\times n$ $\two$-matrix is an $n\times n$ multi-fusion category for which
each component $\mcc_{i,j}$ is equivalent to $\V ec$, and the fusion rule is
$E_{ij}\otimes E_{kl}=\delta_{jk}E_{il}$,  where $\{E_{ij}\}_{1\leq i,j\leq n}$
is a complete set of isomorphism classes of all simple objects.  We will call
$\{ i \}_{1\leq i \leq n}$ the half-label set.

\end{definition}

\begin{example}\label{example:matrices}
The  $n\times n$ $\two$-matrix $\mathcal{M}_n$.

The multi-fusion category $\cM_n$ is the semi-simple category with simple
objects $\{E_{ij}\}, 1 \leq i,j \leq n,$ and fusion rule $E_{ij} \otimes E_{kl}
= \delta_{jk}E_{il}.$  The tensor product is strictly associative as matrix
multiplication, and the tensor unit is $\dsI=\oplus_{i=1}^n E_{ii}$.  $\cM_n$
can be regarded as a categorification of the matrix algebra $M_n$ by replacing
$\C$ with $\V ec$.

A general object in $\cM_n$ is of the form $X = \bigoplus\limits_{i,j=1}^{n}
x_{ij}E_{ij}, x_{ij} \in \mathbb{N}$.  The multiplicities $x_{ij}$ will be
assembled into an $n \times n$ matrix, denoted also as $X$. So an object $X$ is
given by an $n \times n$ matrix $X=(x_{ij})_{1\leq i,j\leq n}$ with non-negative
integral entries, and $E_{ij}$ is represented by the matrix as the notation
indicates: all entries are zero except the $(i,j)$-entry, which is $1$. Then the
tensor product of two objects $X,Y$ is just the matrix multiplication $XY$. For
$X = (x_{ij}), Y = (y_{ij}),$ a morphism from $X$ to $Y$ is of the form $f =
(f_{ij}), $ where $f_{ij}: x_{ij}E_{ij} \longrightarrow y_{ij}E_{ij}$ can be
represented by a linear map from $\mathbb{C}^{x_{ij}} \longrightarrow
\mathbb{C}^{y_{ij}}$, or simply a $y_{ij} \times x_{ij}$ matrix. Hence, a
morphism in $\mathcal{M}_n$ is simply a matrix of matrices.  Then compositions
of morphisms are given by entry-wise matrix multiplication.

\end{example}

\begin{example}
Morita contexts as multi-fusion categories.

Suppose $\mcc$ is a fusion category and $\M$ an indecomposable module category
over $\mcc$.  Let $\mcc^{*}_{\M}={F}un_{\mcc}(\M,\M)$ be the dual of $\mcc$ with
respect to $\M$.  Then
$\begin{pmatrix}
\mcc&{\M}^{*}\\
\M&\mcc^{*}_{\M}
\end{pmatrix}$
is a $2\times 2$ multi-fusion category.

\end{example}

\subsection{Quantum Doubles}

Suppose $\mcc$ is a multi-fusion category, then its quantum double $D(\mcc)$ in
physics or Drinfeld center $\Cen(\mcc)$ in mathematics is also a multi-fusion
category.  Note that $D(\mcc_1\oplus\mcc_2)\cong D(\mcc_1)\oplus D(\mcc_2)$ for
two multi-fusion categories $\mcc_i, i=1,2$.  Therefore, we will mainly focus on
indecomposable multi-fusion categories.

\begin{theorem}\label{double}

Let $\mcc=(\mcc_{ij})_{1\leq i,j\leq n}$ be an $n\times n$ indecomposable
multi-fusion category.  Then the quantum double $D(\mcc)$ of $\mcc$ is
equivalent to $D(\mcc_{ii})$ for any $1\leq i \leq n$. It follows that all
$\mcc_{ii}$ are categorically Morita equivalent to each other.
\end{theorem}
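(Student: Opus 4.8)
The plan is to compute the Drinfeld center $\Cen(\mcc)$ directly, identify it with $\Cen(\mcc_{ii})$ via a restriction functor, and then read off the Morita statement from M\"uger's characterization of categorical Morita equivalence. First I would record the structural consequences of indecomposability. Setting $A_{ij}=1$ when $\mcc_{ij}\neq 0$ and $0$ otherwise, rigidity matches $\mcc_{ij}$ with $\mcc_{ji}$ through duals (so $A$ is symmetric), each simple $\mathbf{1}_i\in\mcc_{ii}$ forces $A_{ii}=1$, and for $0\neq X\in\mcc_{ij}$, $0\neq Y\in\mcc_{jk}$ one has $X\ot Y\neq 0$ because $X^\ast\ot X\ot Y\supseteq\mathbf{1}_j\ot Y=Y\neq 0$. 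Hence $A$ is the incidence matrix of an equivalence relation whose classes are exactly the multi-fusion summands of $\mcc$, so indecomposability means a single class, i.e.\ every $\mcc_{ij}\neq 0$. I would then invoke the structure theory of indecomposable multi-fusion categories (Etingof--Nikshych--Ostrik): each off-diagonal component is an invertible $(\mcc_{ii},\mcc_{jj})$-bimodule category, the tensor product of $\mcc$ inducing equivalences $\mcc_{ij}\boxtimes_{\mcc_{jj}}\mcc_{ji}\xrightarrow{\ \sim\ }\mcc_{ii}$. This is the only serious external input, and it is where simplicity of each $\mathbf{1}_i$ is used.

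Next I would analyze an object $(Z,\gamma)$ of $\Cen(\mcc)$, writing $Z=\bigoplus_{k,l}Z_{kl}$ with $Z_{kl}\in\mcc_{kl}$. The half-braiding with the unit summand $\mathbf{1}_i$ is an isomorphism $Z\ot\mathbf{1}_i\to\mathbf{1}_i\ot Z$; but $Z\ot\mathbf{1}_i$ is supported on the $i$-th column (its $(a,i)$-entry is $Z_{ai}$) while $\mathbf{1}_i\ot Z$ is supported on the $i$-th row (its $(i,b)$-entry is $Z_{ib}$). Comparing entries forces $Z_{ai}=0$ for $a\neq i$ and $Z_{ib}=0$ for $b\neq i$; letting $i$ range shows $Z$ is diagonal, $Z=\bigoplus_i Z_{ii}$. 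For $X\in\mcc_{ij}$ the remaining half-braidings then become isomorphisms $\gamma_X\colon Z_{ii}\ot X\to X\ot Z_{jj}$ in $\mcc_{ij}$, natural in $X$ and subject to the hexagon. Thus an object of $\Cen(\mcc)$ is a family $(Z_{ii})_i$ together with such connecting isomorphisms across every bimodule $\mcc_{ij}$.

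I would then define the restriction functor $R_i\colon\Cen(\mcc)\to\Cen(\mcc_{ii})$ sending $(Z,\gamma)$ to $(Z_{ii},\gamma|_{\mcc_{ii}})$; it is visibly braided monoidal. The heart of the proof is that $R_i$ is an equivalence. Given $(W,\beta)\in\Cen(\mcc_{ii})$, invertibility of $\mcc_{ij}$ lets me transport the central structure of $W$ across the bimodule to a uniquely determined $Z_{jj}\in\mcc_{jj}$ together with a connecting isomorphism $Z_{ii}\ot X\to X\ot Z_{jj}$; the inverse bimodule guarantees the assignment is two-sided and independent of auxiliary choices, and the half-braiding over $\mcc_{jj}$ is induced from $\beta$ through $\mcc_{ji}\boxtimes_{\mcc_{ii}}\mcc_{ij}\simeq\mcc_{jj}$. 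Checking that these data satisfy the hexagon and glue coherently over all $j$ produces a two-sided inverse to $R_i$. The main obstacle is precisely this coherence bookkeeping: verifying that the transported half-braidings are compatible on the triple overlaps $\mcc_{ij}\ot\mcc_{jk}\subseteq\mcc_{ik}$ and that naturality and the hexagon descend under the transport. This is exactly the mechanism by which $\Cen$ is Morita invariant, and it is where the genuine invertibility of the bimodules, rather than mere nonvanishing, is indispensable.

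Finally, since $R_i$ is a braided equivalence for every $i$, I obtain braided equivalences $\Cen(\mcc_{ii})\simeq\Cen(\mcc)\simeq\Cen(\mcc_{jj})$ for all $i,j$, which is the asserted $D(\mcc)\simeq D(\mcc_{ii})$. M\"uger's theorem --- two fusion categories are categorically Morita equivalent if and only if their Drinfeld centers are braided equivalent --- then yields that all the $\mcc_{ii}$ are pairwise categorically Morita equivalent. Alternatively, the invertible bimodule $\mcc_{ij}$ produced in the first step already exhibits the Morita equivalence $\mcc_{ii}\sim\mcc_{jj}$ directly, independently of the center computation.
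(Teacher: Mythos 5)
Your proposal is correct in substance but follows a genuinely different route from the paper. The paper's proof is a two-line reduction: it cites Etingof--Ostrik (Corollary 3.35 of \cite{EO}), which says $D(\mcc)\cong D(\mcc^*_{\M})$ for any indecomposable module category $\M$ over an indecomposable multi-fusion category $\mcc$, applies this to the row module category $\M_i=\oplus_{k}\mcc_{ik}$, and identifies the dual $\mcc^*_{\M_i}\cong\mcc_{ii}^{\mathrm{op}}$; the Morita statement then follows since all the doubles $D(\mcc_{ii})$ agree. You instead compute $\Cen(\mcc)$ by hand: your observation that half-braiding against the simple unit summands forces central objects to be diagonally supported is sound (and is in fact exactly the mechanism behind the paper's own explicit computation of $D(\cM_n)$ in Section 2.3, where central objects are forced to be $mI_n$), and your elementary argument that indecomposability forces every $\mcc_{ij}\neq 0$ is a nice point the paper leaves implicit. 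What your approach buys is a concrete description of central objects and a direct exhibition of the Morita equivalence via the invertible bimodule $\mcc_{ij}$, without routing through duals of module categories; what it costs is that your key step --- that the restriction $R_i\colon\Cen(\mcc)\to\Cen(\mcc_{ii})$ is an equivalence --- is only sketched, and carrying it out amounts to re-proving the Morita invariance of the Drinfeld center (Schauenburg, ENO), while your input (ii), the invertibility of the off-diagonal bimodules in an indecomposable multi-fusion category, is a citation of comparable weight to the paper's appeal to \cite{EO}. If you cite center-Morita-invariance rather than redoing the transport and coherence checks, your argument becomes fully rigorous and of essentially the same length as the paper's, just organized around the bimodules $\mcc_{ij}$ rather than the module category $\M_i$.
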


\begin{proof}

If $\M$ is an indecomposable module category over an indecomposable multi-fusion
category $\mcc$, then $D(\mcc)=D(\mcc^{*}_{\M})$, where $\mcc^{*}_{\M}$ is the
dual of $\mcc$ with respect to $\M$ (Corollary 3.35 \cite{EO}).  For a fixed
$i$, let $\M_{i}=\oplus_{k=1}^n \mcc_{ik}$.  Then $\M_i$ is an indecomposable
$\mcc$-module category.  The dual category of $\mcc$ with respect to $\M_i$ is
$\mcc^{*}_{\M_i}\cong \mcc_{ii}^{\textrm{op}}$, where $\mcc_{ii}^{\textrm{op}}$
is the opposite category of $\mcc$.  The theorem now follows from
$D(\mcc)\cong D(\mcc^{*}_{\M_i})\cong D(\mcc_{ii}^{\textrm{op}})\cong D(\mcc)$.
\end{proof}

\subsection{Doubles of $n\times n$ $\two$-matrices $\cM_n$}

It follows from Thm.~\ref{double} that $D(\cM_n) \cong \V ec$.  To keep our presentation elementary, we provide an
explicit proof that $D(\cM_n)$ is $\V ec$ in this subsection.

Suppose $X = (x_{ij}) = \bigoplus x_{ij}E_{ij}$ is an object of $\cM_n$, and $(X, c_{X},{}_{-})$ an object of $D(\cM_n)$.  Then for any $E_{ij},$ $c_{X, E_{ij}}: X \otimes E_{ij}  \longrightarrow E_{ij} \otimes X$ is an isomorphism.  Since $X \otimes E_{ij} = \bigoplus\limits_{k=1}^{n} x_{ki}E_{kj},$ and $E_{ij} \otimes X = \bigoplus\limits_{k=1}^{n} x_{jk}E_{ik}$, we have $x_{ki} = 0, k\neq i,$ and $x_{ii} = x_{jj}$ for any pair $i,j$. Write $x_{ii} = m$, then $X \otimes E_{ij} = m E_{ij} = E_{ij} \otimes X$, and $c_{X, E_{ij}}$ is an $n \times n$ matrix whose $(i,j)$-entry is an isomorphism $m E_{ij} \longrightarrow m E_{ij}$, i.e. a matrix in $GL(m, \C)$, and whose other entries are all $0$. Thus an object of $D(\cM_n)$ is determined by the set $\{(m, c_{ij})\}, 1 \leq i,j \leq n,$ where $m$ is a positive integer, and $c_{ij} \in GL(m, \C)$. Explicitly, $ X = m I_n$ , and the half braiding between $X$ and $E_{ij}$ is $c_{ij}: m E_{ij} \longrightarrow m E_{ij}$.

\begin{figure}
\begin{tikzpicture}[scale = 0.5]
\begin{scope}
   \draw (4,0) node[left]{$E_{ij}$}-- (1,4);
   \draw (6,0) node[left]{$E_{kl}$}-- (3,4);
   \draw (1,0) [color = white, line width = 2mm] -- (6,4);
   \draw (1,0) node[left]{$X$} -- (6,4);
   \draw (7,2) node{$=$};
   \draw (10,0) node[right]{$E_{ij}$} -- (8,2)-- (8,4);
   \draw (13,0) node[right]{$E_{kl}$} -- (13,2)-- (10,4);
   \draw (8,0)[color = white, line width = 2mm] -- (10,2)[color = white, line width = 2mm] -- (13,4);
   \draw (8,0) node[right]{$X$}  -- (10,2) -- (13,4);
\end{scope}
\end{tikzpicture}
\caption{$\textrm{Hexagon Equations}$} \label{hexagon}
\end{figure}

To find the constraints from the hexagon equations as illustrated by Fig.~\ref{hexagon},
we see that the left-hand side of the equation in Fig.~\ref{hexagon} is given by
$\delta_{jk}c_{il}: m E_{il} \longrightarrow m E_{il},$ and the right-hand side
is given by $\delta_{jk} c_{jl}c_{ij}$. Thus we obtain

\begin{equation} \label{braiding equ}
c_{ij} = c_{kj}c_{ik}, \forall 1 \leq i,j,k \leq n.
\end{equation}

Since every $c_{ij}$ is invertible, it follows that $c_{ii} = I_{m}$, and
$c_{ij} = c_{ji}^{-1}$. Hence the $c_{ij}$'s are completely determined by
$c_{i1}, 2\leq i\leq n$ through the formula $c_{ij} = c_{j1}^{-1}c_{i1}$. The
matrices $c_{21}, \cdots, c_{n1} \in GL(m, \C)$ can be chosen arbitrarily, and $c_{11}=I_m$. Thus, an object of $D(\cM_n)$ is determined
by a positive integer $m$ and $(n-1)$ matrices $c_{21}, \cdots, c_{n1} \in GL(m,
\C)$.

To understand the morphisms in the doubles, we consider two objects $(X,
c_{ij}), (X', c_{ij}'),$ where $X = m I_n, X' = m' I_n$.  Then a morphism
$\varphi:(X, c_{ij})\rightarrow (X', c_{ij}')$ is given by
$(\delta_{ij}\varphi_{ii}),$ where $\varphi_{ii}: mE_{ii} \longrightarrow
m'E_{ii}$ is a linear map. This morphism should commute with the half braiding,
shown in Fig.~\ref{morphism}.

\begin{figure}[!h]
\begin{tikzpicture}[scale = 0.5]
\begin{scope}
   \draw (3,0) node[left]{$E_{ij}$}-- (0,3) -- (0,5) node[left]{$E_{ij}$};
   \draw (0,0) [color = white, line width = 2mm] -- (3,3);
   \draw (0,0) node[left]{$X$} -- (3,3) node[right]{$X$} -- (3,3.5);
   \draw (2.5,3.5) rectangle (3.5,4.5);
   \draw (3,4.5) -- (3,5) node[right]{$X'$};
   \draw (3,4) node{$\varphi$};
   \draw (5,3) node{$=$};
   \draw (10,0) node[left]{$E_{ij}$} -- (10,2) -- (7,5) node[left]{$E_{ij}$};
   \draw (7,0) node[left]{$X$} -- (7,0.5);
   \draw (6.5,0.5) rectangle (7.5,1.5);
   \draw (7,1) node{$\varphi$};
   \draw (7,1.5) -- (7,2) node[left]{$X'$} [color = white, line width = 2mm] -- (10,5);
   \draw (7,1.5) -- (7,2) node[left]{$X'$} -- (10,5) node[left]{$X'$};
\end{scope}
\end{tikzpicture}
\caption{$\textrm{Morphisms in } D(\cM_n)$} \label{morphism}
\end{figure}

Fig.~\ref{morphism} leads to the following equations for the morphism $\varphi$
to satisfy:
$$\varphi_{jj} c_{ij} = c_{ij}' \varphi_{ii}.$$

Now assume $m = m', $ and $\varphi_{ii}$ is an isomorphism.  The equations above
can be rewritten as $c_{ij}' = \varphi_{jj}c_{ij}\varphi_{ii}^{-1}$. By
Eq.~(\ref{braiding equ}), it suffices to satisfy $c_{i1}' =
\varphi_{11}c_{i1}\varphi_{ii}^{-1}$ for $i=2,\cdots n .$  Using the freedom for
choosing $\varphi_{ii}$, we choose them so that $c_{i1}' = I_{m}$ for all $i$,
and thus $c_{ij}' = I_m, \forall 1 \leq i,j \leq n$. Therefore, two objects of
$D(\cM_n)$ are isomorphic if and only if their diagonal entries $m$ and $m'$ are
the same, i.e. an isomorphism class is uniquely determined by a positive integer
$m$.  For each $m$,  we choose a representative $(X,c_{ij}) = (m I_n, I_m),$
which is denoted as $(m)$.

Note that $(m) \oplus (m') = (m+m')$. Hence, $D(\cM_n)$ is generated by the
single object $(1)=(I_n,1)$. Note that $Hom((1),(1)) = \C,$ so $(1)$ is the only
simple object in the category. Thus, $D(\cM_n) = \V ec$ as expected.

\section{Levin-Wen model for Multi-fusion Categories}

Fix an integer $d\geq 2$, and a cellulation $\gamma$ of an oriented closed
surface $Y$.  We often also refer to $\gamma$ as a graph in $Y$ by thinking
about the $1$-skeleton of $\gamma$. Let $V(\gamma), E(\gamma)$, and $F(\gamma)$
be the set of vertices (sites), edges (bonds), and faces (plaquettes) of
$\gamma$, respectively.
Then $L_{\gamma}(Y)$ will be the local Hilbert space $\otimes_{e\in E(\gamma)}\C^d$,
i.e. we attach a {\it qudit} $\C^d$ to each edge.  The orthonormal basis of
$L_{\gamma}(Y)$ consists of all colors of the edges by a basis of $\C^d$.
In this section, $d$ will be the rank of the input UFC $\mcc$, i.e., the number
of labels.

\begin{definition}

A Hamiltonian $H$ is a \textit{commuting local projector} (CLP) Hamiltonian if
$H=\sum_{\alpha}P_{\alpha}$, where $P_{\alpha}$ is a collection of pair-wise
commuting local orthogonal projectors.

\end{definition}

In general, we are not really interested in a single CLP Hamiltonian, rather a
prescription for writing down a family of CLP Hamiltonians on all local Hilbert
spaces $L_{\gamma}(Y)$ associated to  cellulations $\gamma$ of $Y$.  Such a
prescription will be called a {\it Hamiltonian schema}.  Since we are interested
in thermodynamical physics, we need to study limits when the size of cellulations
measured by the mesh goes to $0$.  We can use Pachner's theorem to organize
all triangulations of a surface into a directed set.  Then local Hilbert
spaces and their ground state manifolds form inverse systems of finite dimensional
Hilbert spaces.

The numerical data to specify the local Hilbert space and Hamiltonian of a LW
model is a description of a UFC in terms of $6j$-symbols.  In order to
implement unitarity and symmetries, we demand some symmetries of the $6j$ symbols.  There
are subtleties when the input UFC has multiplicities in the fusion rules, as defined below, and
non-trivial Frobenius-Schur indicators.  In the following, we will assume that
all UFCs are multiplicity free and their modified $6j$-symbols, called
tetrahedral symbols, have the full tetrahedral symmetry, as defined below.  Not
all UFCs have tetrahedral symbols that have the full tetrahedra symmetry
\cite{Ho}.

\subsection{Levin-Wen Hamiltonian schema for unitary fusion categories}

A \textit{label set} $L$ is a finite set with a distinguished element $0$ and
with an involution $^*:L\rightarrow L$ such that $0^*=0$. Elements of $L$ are
called labels, $0$ is called the trivial label, and $j^*\in L$ is called the
dual of $j\in L$.

A \textit{fusion rule} on $L$ is $N:L \times L \times L\rightarrow \N$ such that for $a,b,c,d\in L$,
\begin{align}
&N_{0a}^b=N_{a0}^b=\delta_{ab},\label{eq:N:a0b}\\
&N_{ab}^0=\delta_{ab^*},\label{eq:N:ab0}\\
&\sum_{x\in L}N_{ab}^xN_{xc}^d=\sum_{x\in L}N_{ax}^dN_{cd}^x.\label{eq:NN=NN}
\end{align}
A fusion rule is \textit{multiplicity-free} if $N_{ab}^c\in \{0,1\}$ for all
$a,b,c\in L$. Set $\delta_{abc}:=N_{ab}^{c^*}$, then $\delta_{abc}=\delta_{bca}$ and
$\delta_{abc}=\delta_{c^*b^*a^*}$. A triple $(a,b,c)$ is admissible if
$\delta_{abc}=1$.

Given a fusion rule on $L$, a \textit{loop weight} is a map $\rmw:L\rightarrow
\dsR \backslash \{0\}$ such that
$\rmw_{a^*}=\rmw_{a}$ and
\begin{equation}
\sum_{c\in L}\rmw_c\delta_{abc^*}=\rmw_a\rmw_b.
\end{equation}
In particular, $\rmw_0=1$. For unitary modular categories, the quantum
dimensions---quantum traces of the identity morphisms---satisfy $\rmd_j\geq 1$ for all $j\in L$. Quantum dimensions might differ from
loop weights $\{w_i\}$.  We let $\alpha_i=\frac{\rmd_i}{\rmw_i}=\pm 1$ for each label, and require:
\begin{equation}
\alpha_i\alpha_j\alpha_k=1, \quad \text{if }\delta_{ijk}=1.
\end{equation}

A \textit{symmetrized tetrahedral symbol} is a map $T:L^6\rightarrow \dsC$ satisfying the following conditions:
\begin{align}
&\text{tetrahedral symmetry:} &T^{ijm}_{kln}=T^{mij}_{nk^{*}l^{*}}
=T^{klm^{*}}_{ijn^{*}}=\alpha_m\alpha_n\,\overline{T^{j^*i^*m^*}_{l^*k^*n}},\label{eq:6j:TetSymmetry}\\
&\text{pentagon identity:} &\sum_{n}{\rmw_{n}}T^{mlq}_{kp^{*}n}T^{jip}_{mns^{*}}T^{js^{*}n}_{lkr^{*}}
=T^{jip}_{q^{*}kr^{*}}T^{riq^{*}}_{mls^{*}},\label{eq:6j:PentagonIdentity}\\
&\text{orthogonality condition:} &\sum_{n}{\rmw_{n}}T^{mlq}_{kp^{*}n}T^{l^{*}m^{*}i^{*}}_{pk^{*}n}
=\frac{\delta_{iq}}{\rmw_{i}}\delta_{mlq}\delta_{k^{*}ip},\label{eq:6j:OrthogonalityCondition}
\end{align}

For convenience, we consider LW models defined on trivalent graphs in a closed oriented surface. Initially, we choose an arrow of each edge to assign a label, but the Hilbert space does not depend on these arrows, by using the following identification: for any state $|\psi\rangle \in L_{\gamma}(Y)$, if we reverse the direction of an edge $e$ and replace its label $j_e$ by its dual $j_e^*$, then the resulting state is identified with the initial state $|\psi\rangle$.  See Fig. \ref{fig:TrivalentGraph}.

\begin{figure}[t]
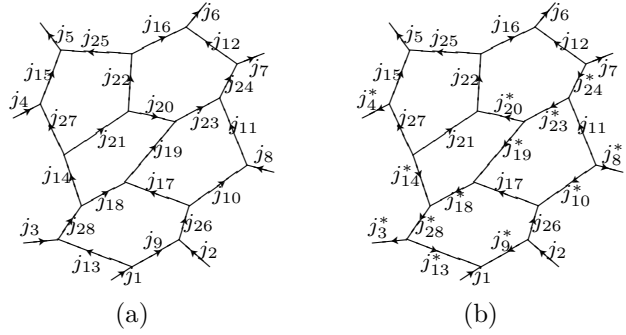

  \centering
  \subfigure[]{\TrivalentGraph}
  \label{fig:TrivalentGraphA}\qquad
  \subfigure[]{\TrivalentGraphII}
  \label{fig:TrivalentGraphB}
  \caption{A configuration of string types on a directed trivalent graph. The configuration (b) is treated the same as (a), with some of the directions of some edges reversed and the corresponding labels $j$ conjugated $j^*$.}
  \label{fig:TrivalentGraph}
 \end{figure}

There are two types of local operators, ${Q}_v$ which are defined at
vertices $v$ and $B_p^s$ which are defined at a plaquette for an $s\in L$. Let
us first define the operator ${Q}_v$.
On a trivalent graph, ${Q}_v$ acts on the labels of three
edges incoming to the vertex $v$. We define the action of ${Q}_v$
on the basis vector with $j_1,j_2,j_3$ by
\begin{align}
  {Q}_v\left|
  \Ypart{j_1}{j_3}{j_2}{>}{<}{<}
  \right\rangle
  =\delta_{j_1j_2j_3}\left|
  \Ypart{j_1}{j_3}{j_2}{>}{<}{<}
  \right\rangle
\end{align}
where the tensor $\delta_{j_1j_2j_3}$ equals either 1 or 0, which
determines whether the triple $(j_1,j_2,j_3)$ is ``allowed'' to
meet at the vertex. Since $\delta_{j_1j_2j_3}=\delta_{j_2j_3j_1}$, the ordering
in the three labels is not important. To be compatible with the conjugation
structure of labels, the branching rule must satisfy
$\delta_{0jj^*}=\delta_{0j^*j}=1$, $\delta_{0ij^*}=0$ if $i\neq{j}$, and
$\delta_{j_1j_2j_3}=\delta_{j_3^*j_2^*j_1^*}$.

One important property of the tetrahedral symbols is that
\begin{equation}
\label{eq:G=Gdelta}
T^{ijm}_{kln}=0 \quad \text{unless } \delta_{ijm}=\delta_{klm^*}=\delta_{lin}=\delta_{nk^*j^*}=1.
\end{equation}
This is a consequence of the orthogonality condition and the tetrahedral symmetry.

For convenience, we take the square root of the loop weight as follows. We define
\begin{equation}
\label{eq:v::Definition}
\rmv_j:=\frac{1}{T^{j^*j0}_{0\,0\,j}}.
\end{equation}
We can verify $\rmv_j^2=\rmw_j$ from the orthogonality condition.
In particular, $\rmv_0=1$.

The operator $B_p^s$ acts on the boundary edges of the plaquette
$p$, and has the matrix elements on a triangle plaquette,
\begin{align}
\label{eq:Bps::InLW}
&\Biggl\langle
\TriangleYpart{j_4}{j^{\prime}_1}{j^{\prime}_2}{j^{\prime}_3}{j_5}{j_6}{>}{<}{>}{<}{}{<}
\Biggr|
B_p^s
\Biggl|\TriangleYpart{j_4}{j_1}{j_2}{j_3}{j_5}{j_6}{>}{<}{>}{<}{}{<}\Biggr\rangle\nonumber\\
=&
\rmv_{j_1}\rmv_{j_2}\rmv_{j_3}\rmv_{j'_1}\rmv_{j'_2}\rmv_{j'_3}
T^{j_5j^*_1j_3}_{sj'_3j^{\prime*}_1}T^{j_4j^*_2j_1}_{sj'_1j^{\prime*}_2}T^{j_6j^*_3j_2}_{sj'_2j^{\prime*}_3}.
\end{align}
The same rule applies
when the plaquette $p$ is a quadrangle, a pentagon,
or a hexagon and so on. Note that the matrix is nondiagonal
only on the labels of the boundary edges (i.e., $j_1$, $j_2$, and
$j_3$ on the above graph).

The operators $B_p^s$ have the properties
\begin{align}
  &B_p^{s\dagger}=B_p^{s^*}
  \label{eq:BpsDagger}\\
  &B_p^rB_{p}^s=\sum_{t}\delta_{rst^*}B_p^t.
  \label{eq:BpsAlgebra}
\end{align}

The Hamiltonian of the model is
\begin{equation}
  \label{eq:HamiltonianLW}
  {H}=-\sum_{v}{Q}_v-\sum_{p}B_p,
  \quad
  B_p=\frac{1}{D}\sum_{s}\rmw_sB_p^{s},
\end{equation}
where $D=\sum_j{\rmd}_j^2$, and the sum runs over all vertices $v$ and all
plaquettes $p$ of the trivalent graph.

The main property of the interactions ${Q}_v$ and $B_p$ is that they are
mutually-commuting, orthogonal projection: (1)
$[Q_v,Q_{v^{\prime}}]=0=[B_p,B_{p^{\prime}}],[Q_v,B_p]=0$; (2)
${Q}_v^2 ={Q}_{v}={Q}_v^*$ and $B_p^2 = B_{p}=B_p^*$.
Thus the Hamiltonian is exactly soluble.
The elementary energy eigenstates are given by common eigenvectors of all these projections.
The ground states have eigenvalues ${Q}_v=B_p=1$
for all $v$ and $p$, while each excited state violates these constraints
for some subset of the plaquettes and vertices.

\subsection{Multi-fusion category extension of the Levin-Wen model}

The input data for LW models can be extended to the multi-fusion case. The
extension is to replace the trivial label 0 by a subset $L_0$ of $L$, in order to numerically
specify the (not necessarily simple) tensor unit of the category.

We start with a \textit{label set} $L$ with an involution $^*:L\rightarrow L$
that is equipped with a trivial set $L_0$, where $L_0$ is determined by the
decomposition of the tensor unit into simple objects as in Sec.
\ref{multi-fusion}.
A \textit{fusion rule} on $L$ is a map $N:L \times L \times L\rightarrow \N$
satisfying that for all $a,b,c,d\in L$,
\begin{align}
&\sum_{\alpha\in L_0}N_{\alpha a}^b=\sum_{\alpha\in L_0}N_{a\alpha}^b=\delta_{ab},
\label{eq:N:a0b:Multifusion}\\
&\sum_{\alpha\in L_0}N_{ab}^{\alpha}=\delta_{ab^*},
\label{eq:ab0:Multifusion}\\
&\sum_{x\in L}N_{ab}^xN_{xc}^d=\sum_{x\in L}N_{ax}^dN_{cd}^x.\label{eq:NN=NN:Multifusion}
\end{align}
These three equations are obtained by formally replacing 0 by $\sum_{\alpha\in L_0}\alpha$ in Eqs. \eqref{eq:N:a0b}, \eqref{eq:N:ab0} and \eqref{eq:NN=NN}. Since $N_{a\alpha}^b\in\N$, the first equality implies that for each label $a\in L$, there exists a unique pair $(\alpha,\beta)\in L_0\times L_0$ such that $N_{\alpha' a}^b=\delta_{ab}\delta_{\alpha'\alpha}$ and $N_{a \beta'}^b=\delta_{ab}\delta_{\beta'\beta}$ for $b\in L, \alpha',\beta'\in L_0$. We say $a$ has the grading $(\alpha,\beta)$. Obviously, each $\alpha\in L_0$ has the grading $(\alpha,\alpha)$.

Therefore $L$ is graded by $L_0\times L_0$: $L=\underset{\alpha,\beta\in L_0}{\sqcup} {}_{\alpha}L_{\beta}$, and we can denote the labels in ${}_{\alpha}L_{\beta}$ by ${}_{\alpha}a_{\beta}$ to specify their gradings $(\alpha,\beta)$. Eqs. \eqref{eq:N:a0b:Multifusion} and \eqref{eq:NN=NN:Multifusion} imply
\begin{equation}
N_{_\alpha a_\beta,_\gamma b_\delta}^{_\epsilon c_\zeta}=0\quad \text{ unless } \alpha=\epsilon,\beta=\gamma,\delta=\zeta.
\end{equation}
Together with Eq. \eqref{eq:ab0:Multifusion}, it implies
\begin{align}
&\alpha^*=\alpha \quad \text{for } \alpha\in L_0,\\
&{_\alpha a_\beta}^*\in {_\beta}L_{\alpha} \quad \text{for } {_\alpha a_\beta}\in {_\alpha L_\beta}.
\end{align}

Given a fusion rule on $\{L,L_0\}$, the loop weight satisfies
\begin{equation}
\sum_{_\alpha c_{\gamma}\in _\alpha L_{\gamma}}\delta_{_\alpha a_\beta,{}_\beta b_{\gamma},(_\alpha c_{\gamma})^*}\rmw_{_\alpha c_\gamma}=\rmw_{_\alpha a_\beta}\rmw_{_\beta b_\gamma}.
\end{equation}
The symmetrized tetrahedral symbols are defined in the same way as those in the previous section, and so are the LW models. This leads to the following conclusion:

\begin{prop}

Using the modified label set $L$ with trivial set $L_0$, the LW Hamitonian
schemas extend to multi-fusion categories, and all resulting Hamiltonians are CLPs.

\end{prop}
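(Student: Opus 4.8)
The plan is to reduce the claim to the corresponding facts already in force for a single fusion category---that the $Q_v$ and $B_p$ form a family of mutually commuting orthogonal projectors---and to check that each identity used there transcribes to the graded multi-fusion data. Since $Q_v$, $B_p^s$, and $B_p=\frac1D\sum_s\rmw_sB_p^s$ are given by the \emph{same} local formulas \eqref{eq:Bps::InLW}, \eqref{eq:HamiltonianLW} in terms of the tetrahedral symbols $T$ and loop weights $\rmw$, the task is to verify the two defining CLP properties: that each $Q_v$ and each $B_p$ is a local orthogonal projector, and that $[Q_v,Q_{v'}]=[B_p,B_{p'}]=[Q_v,B_p]=0$. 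Locality is clear, since each operator acts only on the edges meeting its vertex or bounding its plaquette.

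The operator $Q_v$ is disposed of at once: it is diagonal in the edge basis with eigenvalues $\delta_{j_1j_2j_3}\in\{0,1\}$ by multiplicity-freeness, so $Q_v^2=Q_v=Q_v^*$ and distinct $Q_v$ commute. The content lies in $B_p$. Here I would re-derive the two structural identities \eqref{eq:BpsDagger} and \eqref{eq:BpsAlgebra} from the graded tetrahedral symmetry \eqref{eq:6j:TetSymmetry}, the pentagon identity \eqref{eq:6j:PentagonIdentity}, and the orthogonality condition \eqref{eq:6j:OrthogonalityCondition}, running the fusion-category computation verbatim while carrying the $L_0\times L_0$ grading through every intermediate sum. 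The vanishing rule \eqref{eq:G=Gdelta} is what makes this safe: it forces each tetrahedral factor to vanish unless its labels are grading-compatible, so the only surviving intermediate labels are the admissible ones, and no term is lost or spurious. Self-adjointness $B_p^*=B_p$ then follows from \eqref{eq:BpsDagger} with $\rmw_s=\rmw_{s^*}\in\dsR$, while $[Q_v,B_p]=0$ follows because \eqref{eq:G=Gdelta} makes $B_p^s$ carry admissible configurations to admissible configurations, preserving every $Q_v$-constraint.

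Idempotency $B_p^2=B_p$ I would read off from \eqref{eq:BpsAlgebra} together with the loop-weight multiplicativity $\sum_s\rmw_s\,\delta_{rst^*}=\rmw_r\rmw_t$ (the graded form of the loop-weight relation). Expanding
$$B_p^2=\frac1{D^2}\sum_t\Bigl(\sum_{r,s}\rmw_r\rmw_s\,\delta_{rst^*}\Bigr)B_p^t$$
and summing first over $s$, then over the remaining flux label $r$, collapses the bracket to $\rmw_t\sum_r\rmw_r^2$. Because the flux loop is closed, \eqref{eq:G=Gdelta} confines $r$ to a single diagonal component ${}_\alpha L_\alpha$, namely the one determined by the boundary of $p$; and since $\alpha_r=\rmd_r/\rmw_r=\pm1$ gives $\rmw_r^2=\rmd_r^2$, idempotency is equivalent to
$$D=\sum_{r\in{}_\alpha L_\alpha}\rmd_r^2=\dim\mcc_{\alpha\alpha}.$$
Finally $[B_p,B_{p'}]=0$ is immediate for non-adjacent plaquettes and, for edge-sharing ones, follows from the same pentagon-driven graphical identity as in the fusion case, now performed with the grading recorded on each edge.

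The main obstacle, and the only genuinely new ingredient, is the consistency of this normalization. The constant $D$ in \eqref{eq:HamiltonianLW} is a single number, yet the idempotency calculation evaluates it as the global dimension $\dim\mcc_{\alpha\alpha}$ of the diagonal component selected by the flux sector at $p$, which \emph{a priori} depends on the plaquette. For $B_p$ to be a projector one therefore needs $\dim\mcc_{\alpha\alpha}$ to be independent of $\alpha$; in the fusion case ($|L_0|=1$) this is vacuous and reproduces $D=\sum_j\rmd_j^2$. This independence is exactly what Theorem~\ref{double} provides: for an indecomposable $\mcc$ the diagonal components $\mcc_{\alpha\alpha}$ are categorically Morita equivalent, hence share a common global dimension, so $D$ is well defined; for a decomposable $\mcc$ one applies the argument to each indecomposable summand. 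Granting this, every remaining step is the graded transcription of the established fusion-category argument, and all the resulting Hamiltonians are CLPs.
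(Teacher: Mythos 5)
Your overall strategy---transcribing the fusion-category verification of the CLP properties while carrying the $L_0\times L_0$ grading through every sum---is exactly what the paper has in mind (the paper treats the proposition as immediate from the graded setup and offers no separate proof), and your treatment of $Q_v$, self-adjointness, and the commutators is fine. But the one step you single out as the ``genuinely new ingredient'' is resolved incorrectly, and the error is fatal to idempotency. The vanishing rule \eqref{eq:G=Gdelta} does \emph{not} confine the flux label to a diagonal component ${}_\alpha L_\alpha$: the operator $B_p^s$ with $s\in{}_\alpha L_\beta$, $\alpha\neq\beta$, acts nontrivially, changing the internal loop sector of the plaquette from $\beta$ to $\alpha$. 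This is explicit in the paper's own example, where $B_p^{E_{\alpha\beta}}\colon \ket{\dots,j_p,\dots}\mapsto\delta_{\beta,j_p}\ket{\dots,\alpha,\dots}$; indeed, if the flux were confined to the diagonal, $B_p$ could never mix sectors and the ground state \eqref{eq:GroundState:Phi} (and Prop.~\ref{trivialTQFT}) would be false.

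Consequently your evaluation of the bracket is wrong. Carrying the grading correctly: for $t\in{}_\epsilon L_\zeta$, admissibility forces $r\in{}_\epsilon L_\beta$ and $s\in{}_\beta L_\zeta$ with $\beta$ free, and the graded loop-weight identity gives
\begin{equation*}
\sum_{r,s}\rmw_r\rmw_s\,\delta_{rst^*}
=\rmw_t\sum_{\beta}\ \sum_{r\in{}_\epsilon L_\beta}\rmw_r^2 ,
\end{equation*}
i.e.\ the sum runs over the entire \emph{row} ${}_\epsilon L=\sqcup_\beta\,{}_\epsilon L_\beta$, not over ${}_\epsilon L_\epsilon$. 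So idempotency requires $D=\sum_{r\in{}_\epsilon L}\rmw_r^2=\sum_\beta\dim\mcc_{\epsilon\beta}$ for every row $\epsilon$, not $D=\dim\mcc_{\alpha\alpha}$. For $\cM_n$ your formula gives $D=1$, whence $B_p^2=nB_p$ and $B_p$ is not a projector; the correct row sum gives $D=n$, matching the paper's $B_p=\frac1n\sum_{\alpha\beta}B_p^{E_{\alpha\beta}}$. (Your appeal to Theorem~\ref{double} can be salvaged, but what is needed is that the row sums agree across rows---equivalently that all component dimensions $\dim\mcc_{\alpha\beta}$ of an indecomposable multi-fusion category coincide---not merely that the diagonal fusion categories $\mcc_{\alpha\alpha}$ are Morita equivalent. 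Note also that this shows the normalization $D=\sum_j\rmd_j^2$ literally copied from \eqref{eq:HamiltonianLW}, which would give $n^2$ for $\cM_n$, must itself be reinterpreted as a row sum in the multi-fusion case.)
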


\subsection{The $n\times n$ $\two$-matrix $\cM_n$ as input}\label{sec:Example:SimplestCase}

Consider the multi-fusion category $\cM_n$ from example~\ref{example:matrices}. This example gives the following data. The label set is $L=\{E_{ij}\}$, the trivial set is $L_0=\{E_{ii}\}$, and the fusion rule is
\begin{equation} \delta_{E_{ij},E_{kl},E_{mn}}=\delta_{jk}\delta_{lm}\delta_{ni}.
\label{eq:FusionRule:MatrixAlgebra}
\end{equation}
The set $L=\underset{i,j}{\sqcup}\,{{}_i L_j}$ is graded by $i,j$ where each $_i
L_j$ has only one element, $E_{ij}$. The duals are $E_{ij}^*=E_{ji}$.

Let us set the loop weights to be $\rmw_{E_{ij}}=1$ for all $i,j$. The simplest normalized $6j$-symbol is to take
\begin{equation}
\label{eq:G=Gdelta:Multifusion}
T^{abc}_{def}=
\left \{
\begin{array}{ll}
 1 & \text{if } \delta_{abc}=\delta_{dec^*}=\delta_{eaf}=\delta_{fd^*b^*}=1, \\
 0 & \text{otherwise.} \\
\end{array} \right.
\end{equation}
for $a,b,c,d,e,f\in L$.

The local Hilbert space is spanned by labels on all edges. In our example,
labels are the gradings $(i,j)$. Graphically, we use a double line to represent
the gradings as illustrated below.
\[
\DoubleLineHilbertSpace
\]
We do not draw arrows in the graph as a label on each arrowed edge is identified
with its dual on the same edge with the arrow reversed. For example, the labels on the three vertical edges illustrated above read as $E_{ij}$, $E_{kl}$ and $E_{mn}$ upwards, and as $E_{ji}$, $E_{lk}$ and $E_{nm}$ downwards.

Consider the eigenspace $\scH^{Q=1}$ of $Q_v=1$ for all vertices. The fusion
rule in Eq. \eqref{eq:FusionRule:MatrixAlgebra} has a double line representation
near each vertex of the form
\[
\DoubleLineVertex
\]
which presents an admissible triple $(E_{ij},E_{jk},E_{kl})$ on the three edges
incoming into the vertex, and for which all other combinations are not allowed.
If two lines are connected, then they carry the same label $i$.

Therefore the basis vectors in $\scH^{Q=1}=\otimes_{p}\C^n$ have a double line
representation as below.
\[
\DoubleLineSpaceA
\]
To each plaquette $p$, there is a loop labeled by $j_p$. The basis is denoted in
terms of the loop labels $j_p$ and given by $\{\ket{j_1,j_2,\dots}\}$. This
statement holds for the model on any closed surface.

The operator $B_p$ is now $B_p=\frac{1}{n}\sum_{\alpha\beta}B_p^{E_{\alpha\beta}}$, where $B_p^{E_{\alpha\beta}}$ is defined in Eq. \eqref{eq:Bps::InLW}. In the subspace $\scH^{Q=1}$, $B_p^{E_{\alpha\beta}}$ is a map
\begin{equation}
B_p^{E_{\alpha\beta}}:
\ket{j_1,j_2,\dots,j_p,\dots}\mapsto
\delta_{\beta,j_p}\ket{j_1,j_2,\dots,\alpha,\dots}.
\end{equation}

Therefore there is only one ground state, with common eigenvalues $Q_v=1$ and $B_p=1$ for all $v,p$:
\begin{equation}
\ket{\Phi}=\sum_{\alpha_1,\alpha_2,\dots}\ket{\alpha_1,\alpha_2,\dots,\alpha_p,\dots},
\label{eq:GroundState:Phi}
\end{equation}
up to a constant normalization factor. The discussion can be summarized by the following proposition.

\begin{prop}\label{trivialTQFT}

The LW Hamiltonian schemas with input $\cM_n$ for all $n\geq 1$ realize the trivial $(2+1)$-TQFT.

\end{prop}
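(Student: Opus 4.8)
The plan is to show that the ground-state space of the model is one-dimensional on \emph{every} closed oriented surface $Y$, and then to identify the associated anyon theory with $\V ec$; together these pin down the trivial $(2+1)$-TQFT. The main ingredient is already assembled in the discussion preceding the statement, so I would first organize it into a manifestly topology-independent argument. After restricting to the vertex-constrained subspace $\scH^{Q=1}$, a basis is labeled by one loop label $j_p\in\{1,\dots,n\}$ per plaquette, and the plaquette operator acts as
\[
B_p\ket{\dots,j_p,\dots}=\frac{1}{n}\sum_{\alpha}\ket{\dots,\alpha,\dots},
\]
independently of the incoming label $j_p$. Hence each $B_p$ is the rank-one projector onto the uniform superposition on the $p$-th plaquette, the different $B_p$ act on disjoint tensor factors and so commute, and the common $+1$ eigenspace of all $Q_v$ and all $B_p$ is the single line spanned by $\ket{\Phi}=\bigotimes_p\bigl(\sum_\alpha\ket{\alpha}\bigr)$. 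Since nothing in this computation uses the genus of $Y$ or the combinatorics of the chosen cellulation, I would conclude $\dim V(Y)=1$ for every closed oriented surface.

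Next I would upgrade ``one-dimensional ground space on all surfaces'' to ``trivial TQFT.'' The LW ground-state assignment $Y\mapsto V(Y)$ is the Hamiltonian realization of the Turaev--Viro TQFT, whose modular category of anyons is the Drinfeld center $\Cen(\cM_n)=D(\cM_n)$, and the ground-state degeneracy on the torus equals the number of isomorphism classes of simple objects of this center. The computation above gives $\dim V(T^2)=1$, so there is a single simple anyon; this is consistent with, and indeed reproduces, the identity $D(\cM_n)\cong\V ec$ established in the previous subsection. A modular category with one simple object is $\V ec$, and the Reshetikhin--Turaev TQFT built from $\V ec$ is the trivial TQFT---every state space is $\C$, the mapping class group acts trivially, and all closed $3$-manifold invariants equal $1$. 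Through the Turaev--Viro--Reshetikhin--Turaev correspondence, the LW (Turaev--Viro) TQFT with input $\cM_n$ is therefore trivial, for all $n\geq 1$.

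The step I expect to require the most care is the passage from the ground-state count to the full statement that the TQFT is \emph{trivial}, rather than merely having one-dimensional state spaces: a priori the mapping class group could still act by nontrivial phases. The cleanest way around this is to lean on the already-proved identity $D(\cM_n)\cong\V ec$ rather than on the general multi-fusion extension of the Turaev--Viro--Reshetikhin--Turaev correspondence (which is only conjectural in this paper). Once the center is pinned down to be $\V ec$, there is a unique simple anyon with trivial twist and trivial $S$- and $T$-matrices, leaving no room for nontrivial mapping class group phases or nontrivial closed-manifold invariants, so the resulting TQFT is genuinely the trivial one.
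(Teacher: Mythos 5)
Your first paragraph is exactly the paper's proof: the paper restricts to $\scH^{Q=1}$, identifies the basis with one loop label per plaquette (on any closed surface), notes that $B_p^{E_{\alpha\beta}}$ sends $\ket{\dots,j_p,\dots}$ to $\delta_{\beta,j_p}\ket{\dots,\alpha,\dots}$, concludes that the unique common $Q_v=B_p=1$ eigenstate is the equal-weight superposition $\ket{\Phi}$, and then states the proposition as a summary of that discussion with no further argument. Where you differ is the second half. The paper implicitly equates ``one-dimensional ground space on every closed surface'' with ``trivial TQFT,'' while you rightly observe that a one-dimensional space could a priori still carry nontrivial mapping class group phases. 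However, your proposed fix does not quite escape the circularity you yourself flag: the only bridge in this paper between the lattice model and $D(\cM_n)$ is the multi-fusion extension of the Levin-Wen/Turaev-Viro $\leftrightarrow$ Reshetikhin-Turaev correspondence, which the paper explicitly lists as conjectural, so knowing $D(\cM_n)\cong\V ec$ (Section 2.3) constrains the lattice model only through that conjectural bridge. A more self-contained way to kill the phases follows from your own computation: since every nonzero tetrahedral symbol of $\cM_n$ equals $1$ and all loop weights equal $1$, the re-cellulation (Pachner move) isomorphisms act on the loop-label basis with unit coefficients, so they, and any homeomorphism-induced relabeling of plaquettes, fix $\ket{\Phi}$ exactly rather than merely projectively. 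In any case, your core argument is the paper's argument, and your extra scrutiny of the word ``trivial'' is a genuine strengthening of the exposition, modulo the caveat above.
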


Consider now the example $n=2$, for which it is easy to give an explicit
description of the ground state. In this case the operator $B_p$ is the matrix
$\frac{1}{2}(\dsI+\sigma^x)$ in the local basis $\ket{i_p}$, where
$\sigma^x={\mat[0.8]{0&1\\1&0}}$ is a Pauli matrix. Dropping the constant terms,
we can write the Hamiltonian in the subspace $\scH^{Q=1}$ as
\begin{equation}
H|_{Q = 1}=-\frac{1}{2}\sum_p \sigma^x_p.
\label{eq:Hamiltonian:sigmaX}
\end{equation}
It is convenient to use the dual graph picture. Namely, by
taking the dual graph of a spatial trivalent graph, we obtain a
triangulation of the surface.  Then the ground state is simply a tensor product $\otimes_p\ket{\sigma^x_p=1}$ of all local
eigenstates of $\sigma^x=1$ at the vertices of the dual triangulation.

\subsection{Degeneracy on a Disk}

Consider the disk with a smooth loop boundary. On the graph in Fig.
\ref{fig:DiskWithBoundary}(a), the Hamiltonian takes the form in Eq.
\eqref{eq:HamiltonianLW}, with the first summation over all vertices of the
graph and over all internal plaquettes inside the disk.

\begin{figure}[h]
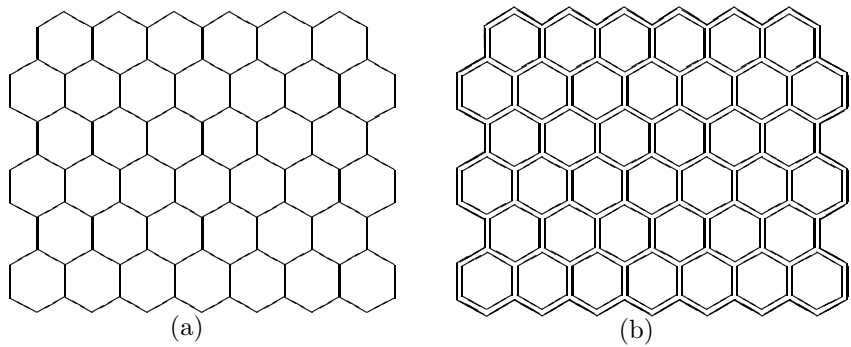

	\centering
	\subfigure[]{\DiskWithBoundary}
	\quad
	\subfigure[]{\DoubleLineDisk}
	\caption{(a). Disk with a loop boundary. (b). Double line representation for $\scH^{Q=1}$.}
	\label{fig:DiskWithBoundary}
\end{figure}

The double line representation for $\scH^{Q=1}$ is illustrated in Fig.
\ref{fig:DiskWithBoundary}(b).  A basis vector in $\scH^{Q=1}$ is denoted by
$\ket{\alpha_{\partial};\alpha_1,\alpha_2,\dots,\alpha_p,\dots}$, specified by a
loop value $\alpha_p$ associated to each plaquette $p$ inside the disk, and a
loop value $\alpha_{\partial}$ associated to the boundary.

The second term $-\sum_p B_p$ in the Hamiltonian does not affect
$\alpha_{\partial}$. Therefore, the ground states are degenerate and
paramterized by $\alpha_{\partial}$. For the input data $\cM_n$, the ground
state degeneracy is $n$.

Similar to the formula in Eq.\eqref{eq:GroundState:Phi}, the degenerate ground
states for all $\alpha_{\partial}$ are
\begin{equation}
\ket{\Phi(\alpha_{\partial})}=\sum_{\alpha_1,\alpha_2,\dots}\ket{\alpha_{\partial};\alpha_1,\alpha_2,\dots,\alpha_p,\dots}
\label{eq:GroundStateDisk}
\end{equation}

\subsection{Topological Entanglement Entropy}

Consider the extended LW model with $\cM_n$ as input.  We divide a trivalent graph into two subsystems $A$ and $B$, where their boundary intersects some edges, denoted by a dashed curve as illustrated in Fig. \ref{fig:Bipartition}.

\begin{figure}[h]
\centering
\[\Bipartition\]
\caption{Partition into subsystems $A$ and $B$ with the boundary along a dashed curve.}
\label{fig:Bipartition}
\end{figure}

Denote the edges across the boundary by $j_1,j_2,\dots,j_l\in L$,
or simply $\{j_i\}$ for short. The number $l$ will be called the length of the boundary
curve.

The reduced density matrix for the ground state $\Phi$ in Eq.~(\ref{eq:GroundState:Phi}) is defined by $\rho_A=\oplus_{\{j_i\}}\rho_A^{\{j_i\}}$, where
\begin{equation}
\rho_A^{\{j_i\}}=\mathrm{tr}_B[\bra{\{j_i\}}(\ket{\Phi}\bra{\Phi})\ket{\{j_i\}}].
\end{equation}
Here $\mathrm{tr}_B$ is the partial trace over all labels in the subsystem $B$.

By definition, the entanglement entropy is
\begin{equation}
S_E=-\mathrm{tr}_A(\rho_A \mathrm{log}\rho_A),
\end{equation}
where we calculate the entanglement entropy on the $2$-sphere.

\begin{figure}[h]
	\centering
	\[\DoubleLineGroundState\]
	\caption{Nonzero contributions to the entanglement spectrum are
	specified by the loop labels $\alpha_1,\alpha_2,\dots,\alpha_l$ on
	the boundary.}
	\label{fig:DoubleLineGroundState}
\end{figure}

The double line representation provides a clear picture of the spectrum of
$\rho_A$: Nonzero contributions to the entanglement spectrum are specified by
the loop labels $\alpha_1,\alpha_2,\dots,\alpha_l$ on the boundary, see Fig.
\ref{fig:DoubleLineGroundState}. Specifically, in terms of the new basis of the
subspace $\scH^{Q=1}$, the boundary is specified by the loop labels
$\alpha_1,\alpha_2,\dots,\alpha_l$. $\rho_A^{\{j_i\}}$ has exactly
one nonzero eigenvalue $\lambda$ if and only if the boundary configuration $\{j_i\}$ has the
following form:

\[
\xy
{\aline^{\alpha_l} (0,0);(0,10)};
{\aline_{\alpha_1} (2,0);(2,10)};
{\aline^{\alpha_1} (20,0);(20,10)};
{\aline_{\alpha_2} (22,0);(22,10)};
{\aline^{\alpha_2} (40,0);(40,10)};
{\aline_{\alpha_3} (42,0);(42,10)};
(60,5)*{\dots};
{\aline^{\alpha_{l-1}} (80,0);(80,10)};
{\aline_{\alpha_l} (82,0);(82,10)};
\endxy
\]

By symmetry, $\rho_A$ has $n^l$ equal eigenvalues, which are
normalized to $\lambda=1/n^l$ by the trace condition $\mathrm{tr}_A(\rho_A)=1$. It follows
that
\begin{equation}
S_E=\mathrm{log}(n)l.
\end{equation}
Since there is not any sub-leading correction term in $S_E$ --- it is exactly
proportional to the length $l$ of the
boundary curve --- the topological entanglement entropy is $0$~\cite{KP,LW2}.  A
similar calculation on the torus also leads to zero topological entanglement entropy.

\section{Symmetry Enriching the Levin-Wen model}

We are interested in enriching the LW model with on-site unitary symmetries.  A
good example is the toric code Hamiltonian $H=-\sum_{v}A_v-\sum_{p}B_p$ on the
square lattice, where a qubit is one each edge.  As usual, the vertex operator
$A_v$ is the tensor product of $\sigma^x$ and the identity, while the plaquette
term is a tensor product of $\sigma^z$ and the identity.  A moment's thought shows
that the tensor product of $\sigma^x$ (or $\sigma^z$) over all edges is an on-site unitary symmetry of the toric code Hamiltonian.  Of
course this $\Z_2$ symmetry is very trivial because it will not permute anyon
types.  But even if a $\Z_2$ symmetry of the toric code does not permute anyon
types, there are still four different ways to fractionalize a $\Z_2$ symmetry in
a one-to-one correspondence to classes in $H^2(\Z_2;\Z_2^2)=\Z_2^2$ \cite{BBCW}.
In this section, we will describe analogous symmetries of the LW Hamiltonians.  It
will be interesting to understand their role in a microscopic theory of symmetry
fractionalization, symmetry defects, and gauging using fixed-point rigorously
solvable Hamiltonians.

\subsection{Classification of $n\times n$ $\two$-matrices}

The half-label set can be endowed with a group structure.  In this subsection, we
classify all $n\times n$ $\two$-matrices whose half-label set has the structure of an abelian group $G$.

By the fusion rule, there are four independent variables in the $6j$-symbols. Denote them by
\begin{equation}\label{G6j}
\phi_4(\alpha,\beta,\gamma,\delta)
:=
T^{E_{\alpha\beta}E_{\beta\gamma}E_{\gamma\delta}}_{E_{\gamma\delta}E_{\delta\alpha}E_{\beta\delta}}\rmw_{E_{\beta\delta}}.
\end{equation}
In this notation the pentagon identity can be written as
\begin{equation}
\phi_4(\alpha,\beta,\gamma,\delta)
\phi_4(\alpha,\beta,\delta,\epsilon)
\phi_4(\beta,\gamma,\delta,\epsilon)
=
\phi_4(\alpha,\gamma,\delta,\epsilon)
\phi_4(\alpha,\beta,\gamma,\epsilon),
\label{eq:Pentagon:phi}
\end{equation}
for $\alpha,\beta,\gamma,\delta=1,2,\dots, n$.

Suppose the half labels $\alpha,\beta,\dots$ form a finite group $G$ with
$|G|=n$, e.g. $G=\Z_n$. Recall that a homogeneous $n$-cochain taking values in
$\mathbb{C}$ is a map $\phi_{n+1}: G^{n+1} \to \mathbb{C} \setminus \{0\}$ such
that $g \cdot \phi_{n+1}(g_1, \dots g_{n+1}) = \phi_{n+1}(gg_1, \dots,
gg_{n+1})$. We will usually consider the trivial $G$-action on $\mathbb{C}
\setminus \{0\}$. Hence, $\phi_4:G^4\rightarrow \dsC\backslash\{0\}$ is a
homogeneous 3-cochain on $G$, equipped with an action:
\begin{equation}
g\cdot\phi_4(\alpha,\beta,\gamma,\delta)
=\phi_4(g\alpha,g\beta,g\gamma,g\delta),
\label{eq:3cocycle:action}
\end{equation}
where we regard $\mathbb{C} \setminus \{ 0 \}$ as a trivial $G$-module. The pentagon identity \eqref{eq:Pentagon:phi} can then identified with the 3-cocycle condition $\delta\phi_4=1$, where the coboundary $\delta$ is defined by
\begin{equation}
\delta\phi_4(\alpha_0,\alpha_1,\dots,\alpha_4)=
\prod_{0\leq i\leq 4}\phi_4(\alpha_0,\alpha_1,\dots,\alpha_{i-1},\alpha_{i+1},\dots,\alpha_4)^{(-1)^i}.
\end{equation}
Therefore, the $6j$-symbols are classified by the third group cohomology classes
in $H^3(G,U(1))$.  Note that not all $3$-cocycles satisfy the tetrahedral
symmetry in Eq. \eqref{eq:6j:TetSymmetry}. We call $3$-cocycles $\phi_4$ defined
as above $G$-invariant.

\begin{definition}

Given a finite group $G$ and a homogeneous $3$-cocycle $\phi_4$, $\phi_4$ is
called \textit{$G$-invariant} if $\phi_4(\alpha,\beta,\gamma,\delta)
=\phi_4(g\alpha,g\beta,g\gamma,g\delta)$ for all
$\alpha,\beta,\gamma,\delta=1,\cdots,n,$ and $ g \in G$. I.e. the action of $G$
on $\phi_4$ given by Eq. \eqref{eq:3cocycle:action} is trivial if
$\dsC\backslash\{0\}$ is regarded as a trivial $G$-module.

\end{definition}

Consider the case where $n=2$. Then the group is $\Z_2=\{0,1\}$. There are two equivalence classes, with the 3-cocycle representatives:
\begin{enumerate}
\item $\rmw_{E_{\alpha\beta}}=1$, and $\phi_4=1$ is constant, as in Sec. \ref{sec:Example:SimplestCase};
\item $\rmw_{E_{\alpha\beta}}=
\left\{
\begin{array}{ll}
 1 & \text{if } \alpha=\beta \\
 -1 & \text{if }\alpha \ne \beta \\
\end{array} \right.$, and
\[\phi_4(\alpha,\beta,\gamma,\delta)=\exp\left[\frac{\pi i}{2}(2-|\alpha+\beta+\gamma+\delta-2|)\right]\rmw_{E_{\beta\delta}.}\]
\end{enumerate}
The two representatives are chosen to satisfy the tetrahedral symmetry in Eq. \eqref{eq:6j:TetSymmetry}. The $G$-actions in Eq. \eqref{eq:3cocycle:action} on both $3$-cocycles are trivial, hence both $3$-cocycles are $\Z_2$-invariant.

Similar to Eq. \eqref{eq:Hamiltonian:sigmaX}, the Hamiltonian for the second class can be written as
\begin{equation}
H=-\frac{1}{2}\sum_p \tau^x_p.
\label{eq:Hamiltonian:TiledSigmaX}
\end{equation}
In the dual triangulation, $\tau^x$ is
\begin{equation}
\tau^x=\left\{\prod_{\langle ij\rangle\in\partial p}\exp\left[i\frac{\pi}{4}(\dsI-\sigma^z_i\sigma^z_{j})+i\frac{\pi}{2}(\dsI+\sigma^z_i\sigma^z_{j})\right]\right\}\sigma^x_p,
\end{equation}
with the product over nearest neighbor vertex pairs on the boundary of $p$,
for example, over $\langle12\rangle,\langle23\rangle,\dots,\langle61\rangle$ in
the example below:
\[\TriangleBp\]
Here only the relevant triangles of the dual graph are shown, assuming the remaining part of the graph is not affected.

\subsection{$G$-symmetric Hamiltonian Schema}

Given a homogeneous $3$-cocycle $\phi_4$, not necessarily $G$-invariant, we have
a multi-fusion category $(\cM_n,\phi_4)$ with $6j$-symbols given by
Eq. \eqref{G6j}. This in turn allows us to define a Levin-Wen Hamiltonian schema
with this multi-fusion category as input.

\begin{definition}
Given a finite group $G$ and a Levin-Wen Hamiltonian schema,  the Levin-Wen
Hamiltonian schema is \textit{$G$-symmetric} if each $g\in G$ acts on the qudit
$\C^d$ as a unitary matrix $U_g$, such that it is a symmetry of all resulting Levin-Wen
Hamiltonians.
\end{definition}

\begin{theorem}\label{Thm:SPT}

If the homogeneous $3$-cocycle $\phi_4$ for an $n\times n$ $\two$-matrix is $G$-invariant, then the Levin-Wen Hamiltonian schema with the $n\times n$ $\two$-matrix $(\cM_n,
\phi_4)$ input is $G$-symmetric, and realizes a $G$-symmetry protected topological phase ({\textrm{SPT}}).

\end{theorem}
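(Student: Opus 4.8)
The plan is to produce an explicit on-site $G$-action, verify it commutes with every term of the Hamiltonian for every cellulation, and then identify the resulting gapped symmetric phase with the group-cohomology SPT labeled by $[\phi_4]$. Since the half-label set is the group $G$ and the simple labels are $E_{ij}$ with $i,j\in G$, I would let $G$ act on itself by left translation and set $U_g\colon E_{ij}\mapsto E_{gi,gj}$ on each qudit $\C^{d}$, $d=n^2$, extended linearly; this is a permutation matrix, hence unitary, and the global symmetry is the on-site product $\bigotimes_e U_g$. Commutation with $Q_v$ is immediate: the branching rule $\delta_{E_{ij}E_{kl}E_{mn}}=\delta_{jk}\delta_{lm}\delta_{ni}$ is preserved because left translation is a bijection of $G$ and leaves every Kronecker delta unchanged, so $U_g$ permutes admissible triples and $[U_g,Q_v]=0$.

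The crux is commutation with $B_p$. Using the matrix elements in Eq.~\eqref{eq:Bps::InLW}, each $B_p^{s}$ is a product of tetrahedral symbols $T$, which for $(\cM_n,\phi_4)$ are the values of $\phi_4$ weighted by the factors $\rmv_j$. Conjugation by $U_g$ shifts all half-labels of the plaquette, internal and external alike, by left translation, hence shifts all four arguments of every $\phi_4$ by $g$ and replaces $s=E_{\alpha\beta}$ by $E_{g\alpha,g\beta}$. The loop weights depend only on the relative grading $i^{-1}j$ and so are $G$-invariant, and by hypothesis $\phi_4(g\alpha,g\beta,g\gamma,g\delta)=\phi_4(\alpha,\beta,\gamma,\delta)$; consequently every matrix element is unchanged, giving $U_gB_p^{s}U_g^{-1}=B_p^{g\cdot s}$. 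Summing over $s$ against the $G$-invariant weights, a sum left invariant by the relabeling $s\mapsto g\cdot s$, yields $U_gB_pU_g^{-1}=B_p$. As this holds at every vertex, plaquette, and cellulation, the schema is $G$-symmetric.

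Finally I would establish the SPT property. The topological order is trivial for every $\phi_4$: by Theorem~\ref{double} the double depends only on a diagonal component, which has the single simple object $E_{ii}$ and is therefore $\V ec$ regardless of $\phi_4$, so $D(\cM_n,\phi_4)=\V ec$; hence the ground state is unique and short-range entangled on any closed surface, exactly as in Prop.~\ref{trivialTQFT}. To pin down the class I would pass to the dual triangulation, solve $B_p=1$ for the unique ground state, and show that its amplitude on a vertex-coloring $\{g_v\}$ is the product of $\phi_4$-factors over the tetrahedra obtained by coning each triangle to a fixed reference group element---that is, the Chen--Gu--Liu--Wen fixed-point wavefunction for $[\phi_4]$. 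Invoking the group-cohomology classification of bosonic SPTs \cite{CGLW} then identifies the phase as the $G$-SPT corresponding to $[\phi_4]\in H^3(G,U(1))$, trivial precisely when $\phi_4$ is a coboundary.

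I expect the last step to be the main obstacle: the two commutation checks are essentially mechanical, but carefully matching the Levin--Wen ground state to the cocycle wavefunction---tracking orientations, the $\rmv_j$ normalizations, and the duality identifications built into the double-line basis---and then deducing phase-inequivalence for non-cohomologous $\phi_4$ is where the real work lies.
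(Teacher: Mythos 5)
Your first two steps are exactly the content of the paper's (very terse) proof: the paper reduces the theorem to Prop.~\ref{trivialTQFT} plus ``the $G$-invariance of Levin-Wen Hamiltonians, which is a straightforward check,'' and your explicit on-site action $U_g\colon E_{ij}\mapsto E_{gi,gj}$ together with the computation $U_gB_p^sU_g^{-1}=B_p^{g\cdot s}$ is precisely that check made concrete. One small caveat there: invariance of the plaquette term also requires $\rmw_{E_{g\alpha,g\beta}}=\rmw_{E_{\alpha\beta}}$ (both inside the matrix elements, via the $\rmv_j$ factors, and in the weighted sum $\sum_s\rmw_sB_p^s$). You assert this by saying the loop weights depend only on the relative grading, but the constraints $\rmw_{E_{\alpha\gamma}}=\rmw_{E_{\alpha\beta}}\rmw_{E_{\beta\gamma}}$ and $\rmw_{a^*}=\rmw_a$ only force $\rmw_{E_{\alpha\beta}}=h(\alpha)h(\beta)$ for some sign function $h$, which is translation-invariant only if $h$ is (projectively) a character; this should either be verified or folded into the $G$-invariance hypothesis, as it is in the paper's $n=2$ examples.

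The real divergence is your third step. The theorem as stated only asserts that the schema realizes \emph{a} $G$-SPT, and the paper's argument for that is simply: trivial TQFT (Prop.~\ref{trivialTQFT}) plus on-site $G$-symmetry equals SPT. It explicitly does \emph{not} claim within this theorem which SPT is realized --- indeed it says ``it is not clear if we have realized any non-trivial SPTs,'' and defers the identification of the class $[\varphi_3]\in H^3(G,U(1))$ to a later ``Physical Theorem'' that the authors flag as a mathematical conjecture, precisely because universality classes of SPT phases are not defined mathematically there. So the step you identify as ``where the real work lies'' --- matching the ground state to the Chen--Gu--Liu--Wen cocycle wavefunction and deducing phase-inequivalence for non-cohomologous $\phi_4$ --- is not needed for the statement being proved, and if you do insist on it as part of the proof, it is a genuine gap: you have only sketched a plan, and carrying it out requires a rigorous definition of SPT equivalence that neither you nor the paper supplies. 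If you drop that step (or downgrade it to a remark, as the paper does), your proof is complete modulo the loop-weight point above.
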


Using Prop. \ref{trivialTQFT}, we just need to check the $G$-invariance of Levin-Wen Hamiltonians, which is a straightforward check.  
But it is not clear if we have realized any non-trivial SPTs, which will be addressed in the next section.

We conjecture that this result can be extended in the following way.
\begin{conj}

The LW Hamiltonian schema with an $n\times n$ multi-fusion $\mcc$ input realizes a symmetry enriched topological phase $D(\mcc)$ with some on-site unitary symmetry $G$, which does not permute anyon types.

\end{conj}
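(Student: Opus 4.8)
The plan is to establish the two claims of the theorem in turn: that the schema is $G$-symmetric, and that the phase it realizes is symmetry-protected, i.e. gapped, $G$-invariant, and free of intrinsic topological order. For the symmetry I would take the on-site unitary of the Definition to be the permutation $U_g\colon E_{ij}\mapsto E_{gi,gj}$ of the $d=n^2$ basis labels of each qudit $\C^d$, induced by left translation of the half-labels by $g\in G$; since left translation is a bijection of the half-label set, $U_g$ is a permutation matrix, hence unitary, and the global operator $\bigotimes_{e\in E(\gamma)}U_g$ is manifestly on-site. The task is then to show $\bigl[\bigotimes_e U_g,\,H\bigr]=0$ for the Hamiltonian of Eq.~\eqref{eq:HamiltonianLW} on every trivalent graph $\gamma$.

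First I would dispatch the vertex term. The branching datum $\delta_{E_{ij},E_{kl},E_{mn}}=\delta_{jk}\delta_{lm}\delta_{ni}$ of Eq.~\eqref{eq:FusionRule:MatrixAlgebra} tests coincidences of half-labels, and these survive simultaneous left translation $i\mapsto gi$ because that map is injective; hence $U_g$ sends admissible triples to admissible triples and commutes with every $Q_v$. The substance is the plaquette term. The point I would use is that $G$-invariance of $\phi_4$, namely $\phi_4(\alpha,\beta,\gamma,\delta)=\phi_4(g\alpha,g\beta,g\gamma,g\delta)$ from Eq.~\eqref{eq:3cocycle:action}, forces the loop weights $\rmw,\rmv$ (defined through the $T$-symbols) to be $G$-invariant as well, and forces each tetrahedral symbol appearing in Eq.~\eqref{eq:Bps::InLW} to be invariant under the simultaneous left translation of all six of its labels. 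Conjugating $B_p^s$ by $\bigotimes_e U_g$ shifts the physical edge labels; undoing this shift on the $T$-symbols by their simultaneous-translation invariance carries the flux label along, giving $\bigl(\bigotimes_e U_g\bigr)B_p^s\bigl(\bigotimes_e U_g\bigr)^\dagger=B_p^{gs}$, where $gs=E_{g\sigma,g\tau}$ for $s=E_{\sigma\tau}$. Summing with the $G$-invariant weights $\rmw_s$ and re-indexing $s\mapsto g^{-1}s$ then returns $\frac1D\sum_s\rmw_sB_p^s=B_p$. Thus $U_g$ is a symmetry of every resulting Hamiltonian and the schema is $G$-symmetric.

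For the SPT claim I would argue that the topological order is trivial. The half-labels $\{1,\dots,n\}$ carry the structure of the pair (chaotic) groupoid, which is equivalent as a groupoid to the trivial group; its cohomology vanishes in positive degree, so \emph{every} solution $\phi_4$ of the pentagon identity is a coboundary and $(\cM_n,\phi_4)$ is monoidally equivalent to $\cM_n$ with the constant associator. By Theorem~\ref{double} the Drinfeld center is therefore $\V ec$, and Prop.~\ref{trivialTQFT} applies without change: the model has a unique ground state on any closed surface and a spectral gap, being a commuting local projector Hamiltonian. A gapped Hamiltonian with trivial topological order together with the on-site $G$-action constructed above is, by definition, a $G$-SPT. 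The coboundary that trivializes $\phi_4$ is a finite-depth circuit which is \emph{not} $G$-invariant, so the symmetric phase need not be trivial; but the theorem asserts only that some SPT is realized, and the identification of its class in $H^3(G,U(1))$ is left to the next section.

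The main obstacle I anticipate is the bookkeeping hidden in the second paragraph: one must verify carefully that $G$-invariance of $\phi_4$ genuinely propagates to the loop weights (equivalently, that the signs $\alpha_i$ of Eq.~\eqref{eq:6j:TetSymmetry} are $G$-invariant), and that the simultaneous-translation invariance of the $T$-symbols is exactly the identity needed to move the flux label from $s$ to $gs$ under conjugation, so that the re-indexing of $\frac1D\sum_s\rmw_sB_p^s$ is exact and not merely correct up to a phase. Once this is in hand, the computation is the ``straightforward check'' promised in the text.
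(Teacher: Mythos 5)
There is a genuine gap, and it is one of scope: the statement you were asked to prove is the paper's \emph{Conjecture} about an arbitrary $n\times n$ multi-fusion category $\mcc$, whereas your entire argument is pinned to the special case $\mcc=(\cM_n,\phi_4)$, i.e.\ to $\two$-matrices. That special case is precisely Theorem~\ref{Thm:SPT} of the paper (whose proof is indeed the ``straightforward check'' of $G$-invariance that your first two paragraphs carry out), and the paper deliberately states the general claim only as a conjecture, leaving it unproved; indeed its key ingredient --- that the LW model with indecomposable multi-fusion input $\mcc$ realizes the anyon theory $D(\mcc)\cong D(\mcc_{ii})$ of Theorem~\ref{double} --- is listed verbatim as Open Question 2. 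Your proposal never touches this: for a general $\mcc$ the diagonal components $\mcc_{ii}$ are arbitrary fusion categories, $D(\mcc)$ is a generically nontrivial modular category, and the phase to be realized is symmetry \emph{enriched}, not symmetry protected, so the ``trivial topological order'' argument of your third paragraph (Prop.~\ref{trivialTQFT}, the pair-groupoid cohomology trivializing $\phi_4$, uniqueness of the ground state) simply does not apply.

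Concretely, two ideas are missing. First, your on-site unitary $U_g\colon E_{ij}\mapsto E_{gi,gj}$ is defined by left translation on the half-label set, which presupposes both the group structure on half-labels and the fusion rule $E_{ij}\otimes E_{kl}=\delta_{jk}E_{il}$ of Eq.~\eqref{eq:FusionRule:MatrixAlgebra}; a general $n\times n$ multi-fusion category has no canonical group structure on its half-label set, its components $\mcc_{ij}$ contain many simple objects, and the conjecture asks for the \emph{existence} of some symmetry group $G$ together with its on-site action --- constructing that $G$ from the data of $\mcc$ is the first unsolved step, not a given. Second, even granting a symmetry, one must show the resulting Hamiltonians have anyon content $D(\mcc)$, identify how $G$ acts on the excitations, and verify that this action does not permute anyon types; none of these is addressed by your argument, and none follows from the $\cM_n$ computation, where there are no nontrivial anyons to permute in the first place. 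In short, you have re-proved (modulo the bookkeeping you flag) a theorem the paper already establishes, but the conjectured statement remains open, and your method does not extend to it.
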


\subsection{De-equivariantizing the $G$-symmetric Levin-Wen model}
To understand if the SPTs realized in Thm.\ref{Thm:SPT} are non-trivial, we study the gauging of the symmetry $G$ \cite{LG,BBCW}.
First we give a proof of the following proposition.

\begin{prop}

There is a non-local transformation from $G$-symmetric LW models to traditional LW models coupled to a local action.

\end{prop}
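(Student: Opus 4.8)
The plan is to realize the required transformation as a Kramers--Wannier-type gauging (de-equivariantization) duality that trades the \emph{global} $G$-symmetry of the $(\cM_n,\phi_4)$ model for \emph{local} $G$-variables of a traditional LW model with input $\V ec_G^{\phi_4}$, whose topological content is the gauged theory $D(\V ec_G^{\phi_4})$. Throughout I would work in the subspace $\scH^{Q=1}$, where by Sec.~\ref{sec:Example:SimplestCase} the model carries one qudit $\mathbb{C}[G]\cong\C^n$ per plaquette $p$ with basis $\{\ket{g_p}\}_{g_p\in G}$, the dynamics is $H=-\tfrac1n\sum_p\sum_{\alpha,\beta}B_p^{E_{\alpha\beta}}$, and the on-site symmetry $U_g$ acts as the simultaneous left shift $\ket{g_p}\mapsto\ket{g\,g_p}$ for all $p$.

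First I would introduce the non-local change of variables. To each edge $e$ of the trivalent graph $\gamma$, which borders two plaquettes $p,q$, I assign the \emph{difference variable} $h_e:=g_p^{-1}g_q\in G$, and I retain one reference value $g_{p_0}$ at a fixed base plaquette $p_0$. Around every vertex $v$ of $\gamma$ the three incident $h_e$ telescope to the identity, so the image satisfies exactly the fusion (flatness) constraint imposed by the vertex projectors $Q_v$ of the $\V ec_G^{\phi_4}$ model; conversely, flatness plus the base value $g_{p_0}$ reconstructs all $g_p$ by a path product. On the sphere a Euler-characteristic count confirms $\dim\scH^{Q=1}=n^{P}=n\cdot n^{P-1}$, with the factor $n$ the base qudit and $n^{P-1}$ the number of flat connections, so the map $\{g_p\}\mapsto(g_{p_0};\{h_e\})$ is a unitary onto $\mathbb{C}[G]_{p_0}\otimes\mathcal{F}$, where $\mathcal{F}$ is the flat sector on which the traditional LW model lives.

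Next I would track the operators. The global symmetry $U_g$ alters only $g_{p_0}\mapsto g\,g_{p_0}$ and fixes every $h_e$, so under the duality it collapses to a shift on the single decoupled factor $\mathbb{C}[G]_{p_0}$: this is the de-equivariantization, and $\mathbb{C}[G]_{p_0}$ with its shift is precisely the local group action to which the traditional model is coupled. For the dynamics I would substitute $g_p=g_{p_0}\prod_e h_e$ into the matrix elements \eqref{eq:Bps::InLW} of $B_p^{E_{\alpha\beta}}$; the three $T$-symbols around $p$, assembled through the definition \eqref{G6j} of $\phi_4$ and the pentagon identity \eqref{eq:Pentagon:phi}, should reorganize into the twisted plaquette amplitude that the $\V ec_G^{\phi_4}$-LW model assigns to flipping the connection on the edges bordering $p$, with $\phi_4$ reappearing exactly as the associator. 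The $n=2$ computation \eqref{eq:Hamiltonian:TiledSigmaX}, in which the twisted flip $\tau^x_p$ is the bare flip $\sigma^x_p$ dressed by a local phase circuit, is the template: it shows the cocycle twist is carried by a \emph{local} action, so that $H$ maps term-by-term to the traditional LW Hamiltonian plus the decoupled local $G$-action on $\mathbb{C}[G]_{p_0}$.

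The hard part will be the operator bookkeeping in this last step. After the non-local substitution $g_p=g_{p_0}\prod_e h_e$, one must verify that the product of $T$-symbols in \eqref{eq:Bps::InLW} reduces to the twisted plaquette amplitude of $\V ec_G^{\phi_4}$ with \emph{no} residual dependence on the choice of path from $p_0$ and no leftover dependence on the base point. This is exactly where the homogeneity of $\phi_4$ and the $3$-cocycle condition must be invoked to cancel the path- and base-point-dependent phases, the path ambiguity contributing a coboundary that is absorbed by the tetrahedral normalization $\rmv_j$. Establishing this gauge-independence is the crux: it is precisely what makes the non-local transformation well defined, and it is what guarantees that the duality intertwines the two Hamiltonians once restricted to the flat sector $\mathcal{F}$.
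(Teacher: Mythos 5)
Your construction is the same in outline as the paper's: difference variables $h_e=g_p^{-1}g_q$ on edges plus a retained base value $g_{p_0}$, the observation that flatness around each vertex is the vertex constraint of a $\V ec_G$-type LW model, and the observation that the global symmetry fixes every $h_e$ and becomes a shift of the base register alone. But your final step contains a genuine error, and it is precisely at the point the proposition is about. You claim that $H$ maps ``term-by-term to the traditional LW Hamiltonian plus the \emph{decoupled} local $G$-action on $\C[G]_{p_0}$.'' This is false. Under your own change of variables, the plaquette operator at the base plaquette does not act on the base register alone: flipping $g_{p_0}$ from $\beta$ to $\alpha$ left-shifts the base register \emph{and} multiplies every adjacent difference variable $h_e$ by $\alpha^{-1}\beta$. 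Consequently the image of $B_{p_0}$ is
$\tilde{B}_{p_0}=\frac{1}{n}\sum_{g}\rmw_g\,\tilde{B}_{p_0}^g\,T^g_{p_0}$,
a sum of \emph{products} of an edge-flip operator around $p_0$ and a base-shift operator --- a genuine coupling between the flat sector and the base qudit, not a sum of an operator on the flat sector and an operator on $\C[G]_{p_0}$. This coupling is exactly what ``coupled to a local action'' means, and the paper's subsequent analysis depends on it: on $S^2$ the global constraint forces $\tilde{B}^g_{p_0}=1$ on ground states, hence $T^g_{p_0}=1$, so the unique ground state is $G$-symmetric; on the disk the ground states are labelled by irreps $\rho$ of $G$, with the quasiparticle at $p_0$ coupled to a local charge $\rho^*$ (Table~\ref{tab:Disk}). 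If the base factor truly decoupled, all of this structure would be lost, and the duality would degenerate to ``traditional LW model tensor a free qudit.''

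A second, smaller gap: you defer the identification of the transformed plaquette amplitude to ``the hard part \dots should reorganize,'' and locate the crux in cancelling path- and base-point-dependent phases via homogeneity and the cocycle condition. The paper does not need such a cancellation argument for the Hamiltonian: the transformed input data is written down explicitly by de-equivariantization, $\varphi_3(x,y,z)=\phi_4(1,x,xy,xyz)$, with $\tilde{\rmw}_g=\rmw_{E_{1g}}$, $\tilde{\delta}_{g_1,g_2,g_3}=\delta_{g_1g_2g_3,1}$, and $\tilde{T}=\varphi_3/\rmw$ as in Eqs.~\eqref{eq:TraditionalCoupledToG12}--\eqref{eq:TraditionalCoupledToG14}; paths enter only in the Hilbert-space isomorphism, where path-independence is exactly the trivial-holonomy (and, beyond $S^2$, trivial-homotopy) condition. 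What your argument actually requires, and never states, is the hypothesis that the induced $G$-action on $\varphi_3$ is trivial (equivalently, that $\phi_4$ is $G$-invariant): by the paper's Remark, the dictionary above --- hence the de-equivariantized Hamiltonian --- is well-defined only in that case, and ``homogeneity plus the $3$-cocycle condition'' alone does not supply it. Relatedly, the twist of the target model should be the inhomogeneous cocycle $\varphi_3$, i.e.\ $\V ec_G^{\varphi_3}$, not $\phi_4$ itself, which is a four-variable homogeneous cochain and not a twist for $\V ec_G$.
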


Given a finite group $G$, a homogeneous $3$-cocycle $\phi_4$ of $G$ can be de-equivariantized to obtain an inhomogeneous $3$-cocycle $\varphi_3$ by setting
\begin{equation}
\varphi_3(x,y,z)=\phi_4(1,x,xy,xyz),
\end{equation}
for $x,y,z\in G$ and $1$ is the identity element of $G$.   The $3$-cocycle $\varphi_3$ has a group action
\begin{equation}
g\cdot \varphi_3(x,y,z)=\phi_4(g,gx,gxy,gxyz).
\end{equation}
The inhomogeneous $3$ cocycles $\varphi_3$ and homogeneous $3$-cocycles $\phi_4$ are in one-one correspondence because $\phi_4$ can be recovered from $\varphi_3$ by
\begin{equation}
\phi_4(\alpha,\beta,\gamma,\delta)=\alpha\cdot \varphi_3(\alpha^{-1}\beta,\beta^{-1}\gamma,\gamma^{-1}\delta).
\label{eq:phi4tophi3}
\end{equation}

This de-equivariantization reduces the $G$-symmetric data from a multi-fusion category to input data from an abelian modular category $\mathcal{V}ec_G^{\varphi_3}$ with a nontrivial action of $G$ on $\varphi_3$.

The correspondence between $\phi_4$ and $\varphi_3$ can be adapted to the local Hilbert spaces and their Hamiltonians, therefore, the correspondence establishes a non-local duality transformation. In the following, we will work with the dual triangulations and consider only the $2$-sphere $S^2$ for simplicity.

For the local Hilbert spaces, the subspaces $\scH^{Q=1}$ are spanned by the group elements $\{\alpha_p\}$ at vertices $p$ of the dual triangulations. Choose an arbitrary vertex $p_0$, and designate it as the {\it origin}.

On the $2$-sphere, the set of group elements $\{\alpha_0,\alpha_1,\alpha_2,\dots\}$ assigned to vertices corresponds to the set of group elements $\{g_1,g_2,\dots\}$ assigned to edges satisfying the following condition: around any triangle, the holonomy (the product of the three group elements around the triangle) is equal to the identity $1$. In fact, the group element $g_e$ on each edge $e$ can be written as $g_e=\alpha_2\alpha_1^{-1}$, so it is determined by $\alpha_1$ ($\alpha_2$) at the starting (ending) point of $e$. Conversely, given $\alpha_0$ at the origin vertex $p_0$, $\alpha_p$ can be determined as follows: choose an arbitrary path from $p_0$ to $p$, multiply the group elements on the edges along the path and $\alpha_0$. The two constructions above give rise to an isomorphism
\begin{equation}
\{\alpha_0,\alpha_1,\alpha_2,\dots\}|_{\text{vertex colors}}\cong
\{\alpha_0;g_1,g_2,\dots\}|_{\text{trivial holonomy}}.
\label{eq:isomorphism::Sphere}
\end{equation}
where ``trivial holonomy'' means that the group elements $g$ around each triangle have a product equal to the identity $1$. Therefore, the Hilbert space $\scH^{Q=1}$ has a basis
\begin{equation}
\{\ket{\alpha_0;g_1,g_2,\dots}\}\left|_{\text{trivial holonomy}}\right.
\label{eq:basis:agg}
\end{equation}

If the $G$-action is trivial, then the $G$-symmetric Hamiltonian can be
de-e\-qui\-var\-i\-ant\-ized as follows. First, $\varphi_3$ produces new input data
$\{\tilde{\rmw},\tilde{\delta},\tilde{T}\}$, where $g,g_1,g_2,g_3\in G$, by
defining
\begin{align}
&\tilde{\rmw}_{g}=\rmw_{E_{1g}},\label{eq:TraditionalCoupledToG12}\\
&\tilde{\delta}_{g_1,g_2,g_3}=\delta_{g_1g_2g_3,1},\\
&\tilde{T}^{g_1,g_2,(g_1g_2)^{-1}}_{g_3,(g_1g_2g_3)^{-1},g_2g_3}=\varphi_3(g_1,g_2,g_3)/\rmw_{g_2g_3}.
\label{eq:TraditionalCoupledToG14}
\end{align}
Then, the Hamiltonian in terms of $\{\tilde{\rmw},\tilde{\delta},\tilde{T}\}$ is
\begin{equation}
H=-\sum_{v}\tilde{Q}_v-\sum_{p}\tilde{B}_p,
\label{eq:Hamiltonian::TraditionalCoupledToG}
\end{equation}
where $\tilde{B}_p=\frac{1}{n}\sum_{g}\rmw_g\tilde{B}_p^g$ for all plaquettes except for $p_0$, and $\tilde{B}_p^g$ is defined as in Eq. \eqref{eq:Bps::InLW} in terms of $\varphi_3$, which acts on the degrees of freedom $g_1,g_2,\dots$ in the basis \eqref{eq:basis:agg}.

At $p_0$, $\tilde{B}_{p_0}=\frac{1}{n}\sum_{g}\rmw_g\tilde{B}_{p_0}^g T^g_{p_0}$, where
\begin{equation}
T^g_{p_0}: \ket{\alpha_0;g_1,g_2,\dots}\mapsto
\ket{g\alpha_0;g_1,g_2,\dots}.
\end{equation}

Therefore, the non-local transformation defines a one-to-one correspondence between the $G$-symmetric LW models and the modified traditional LW models with input data from $\mathcal{V}ec_G^{\varphi_3}$ and $\tilde{B}_{p_0}$ coupled to the local group action $T^g_{p_0}$. The local group action $T^g_{p_0}$ corresponds to a global action in the $G$-symmetric LW model:
\begin{equation}
T^g:\ket{\alpha_0,\alpha_1,\dots,\alpha_p,\dots}
\mapsto
\ket{g\alpha_0,g\alpha_1,\dots,g\alpha_p,\dots}
\label{eq:GlobalAction}
\end{equation}

Let us apply the non-local transformation on the ground state $\Phi$ on the $2$-sphere.
In the transformed traditional LW model, the ground state is the common eigenstate of $\tilde{B}_p^g=1$, for $p\ne p_0$, and $\tilde{B}_{p_0}^gT_{p_0}^g=1$, for all $g\in G$. The global constraint in the traditional LW model enforces $\tilde{B}_{p_0}^g=1$ and hence $T_{p_0}^g=1$. By the non-local transformation, $T_{p_0}^g=1$ means that the ground state is invariant under the global symmetry $\{T^g\}$ in the $G$-symmetric LW model.

{\bf Physical Theorem}\footnote{By a physical theorem, we mean that the argument is only rigorous physically. Therefore, physical theorems should be regarded as mathematical conjectures.}: {\it The $G$-symmetric LW model with input $\cM_n$ realizes a $G$-{\textrm{SPT}} with the $3$-cocycle $\varphi_3 \in H^3(G;U(1))$ when $\varphi_3$ is $G$-invariant}.

We did not prove this theorem mathematically because we did not define universality classes of SPT phases mathematically.  But physically we summarize the argument above as follows.
Each $G$-invariant $3$-cocycle $\varphi_3$ leads to an SPT because the LW model realizes the trivial TQFT.  To understand the local term $T_{p_0}^g$, we map the SPT model to a nontrivial TQFT coupled to a gauge field with a gauge coupling term, where the half-labels represent the gauge field. If we eliminate the gauge coupling term, all half-labels are eliminated as well except the one at the base point. This leaves behind the local term at the base point.

\begin{remark}The input $6j$-symbols in Eqs. \eqref{eq:TraditionalCoupledToG12}-\eqref{eq:TraditionalCoupledToG14} are well-defined only when the G-action on $\varphi_3$ is trivial. So de-equivariantization works only for trivial $G$-actions. If the $G$-action on $\varphi_3$ is nontrivial, then the $6j$-symbols are equipped with a G-action, which leads to a LW model with a gauge group action.
\end{remark}

\subsection{On a Disk}

Consider further a disk with a smooth boundary, e.g., with the graph in Fig. \ref{fig:DiskWithBoundary}(a). The non-local transformation leads to the same form of the Hamiltonian as in Eq. \eqref{eq:Hamiltonian::TraditionalCoupledToG}, but with the second summation over all plaquettes $p$ inside the disk. The degenerate ground states $\Phi(\alpha_{\partial})$ in the $G$-symmetric LW model are parameterized by the half-label $\alpha_{\partial}$. Now let us reexamine the ground states in the traditional LW model under the non-local transformation.

Take an arbitrary plaquette inside the disk as the origin, denoted by $p_0$. The
ground states are the common eigenstates of $\tilde{B}_p^g=1$, for $p\ne p_0$
inside the disk, and $\tilde{B}_{p_0}^gT_{p_0}^g=1$, for all $g\in G$. Due to
the presence of the boundary, the global constraint on $\tilde{B}_{p_0}^g$ is
released. If $\tilde{B}_{p_0}^g$ transforms under a non-trivial irreducible representation
$\rho$ of $G$, we say there is an elementary quasiparticle (or a topological
defect) at $p_0$ identified by its topological charge $\rho$. This topological
charge is always coupled to a charge which transforms under the dual
representation $\rho^*$ of the local group action.

The degenerate ground states $\Phi_{\rho}$ are thus parametrized by the charge
$\rho$. Under the non-local transformation, they correspond to the ground states
in the $G$-symmetric LW model, carrying a global charge $\rho^*$ under the
global symmetry $\{T^g\}$. Meanwhile, the topological charge $\rho$ of the local
quasiparticle in the traditional LW model is mapped to the boundary condition
specified by $\rho$ in the $G$-symmetric LW model. This relation between
$G$-symmetric LW models and LW models coupled to a gauge action is listed in
Table~\ref{tab:Disk}.

\begin{table}[h]
\caption{Non-local transformation on a disk.}
\label{tab:Disk}
\begin{center}
	\begin{tabular}{c|c}
		\hline
		\hline
		$G$-symmetric LW model & Traditional LW model coupled to a local action\\
		\hline\hline
		global symmetry & a local action on Hamiltonians\\
		\hline
		boundary condition  &
		bulk local quasiparticle\\
		specified by $\rho$ & with topological charge $\rho$\\
		\hline
		global charge $\rho^*$ &
		a local charge $\rho^*$ coupled to the quasiparticle\\
		\hline
	\end{tabular}
\end{center}
\end{table}

For example, take $\cM_n$ as the input data, and let $G=\Z_n$. The
degenerate ground states can be parameterized by the charge $k=0,1,\dots,n-1$ of $\Z_n$, being the eigenvectors of
\begin{equation}
\tilde{B}_{p_0}^g=\exp\left(\frac{2k\pi g i}{n}\right),
T_{p_0}^g=\exp\left(-\frac{2k\pi g i}{n}\right).
\end{equation}
Such ground states $\Phi_k$ are related to $\Phi(\alpha_{\partial})$ by the following Fourier transformation
\begin{equation}
\Phi_k=\frac{1}{\sqrt{n}}\sum_{\alpha_{\partial}}\exp\left(\frac{2k\pi \alpha_{\partial} i}{n}\right)\Phi(\alpha_{\partial}).
\end{equation}
One can verify the identity by applying the action of $T^g$ in Eq. \eqref{eq:GlobalAction} directly.

\subsection{On a General Closed Surface} The de-equivariantization can be applied on an arbitrary closed surface $Y$ in a similar way. The isomorphism in Eq. \eqref{eq:isomorphism::Sphere} is replaced by
\begin{equation}
\{\alpha_0,\alpha_1,\alpha_2,\dots\}|_{\text{vertex colors}}\cong
\{\alpha_0;g_1,g_2,\dots\}|_{\text{trivial homotopy \& trivial holonomy}},
\label{eq:isomorphism::RiemannianSurface}
\end{equation}
where trivial homotopy means that along any non-contractible loop on the
dual-triangulation of the graph, the group elements $g$ multiply to the
identity element of $G$.

$G$-symmetric LW models are transformed to traditional LW models in the trivial homotopic Hilbert subspace coupled to a local action. The models are well defined because the Hamiltonian is invariant in the trivial homotopic Hilbert subspace.

\section{Open Questions}
We have studied how Levin-Wen models can be extended to take multi-fusion categories as their input, and how on-site symmetries play a role. There are however still interesting open questions. We mention a few:

\begin{enumerate}

\item Classify $n\times n$ $\two$-matrices.

\item Prove that the LW model with an indecomposable multi-fusion category input $\mcc=\oplus_{ij}\mcc_{ij}$ realizes the Turaev-Viro TQFT based on $\mcc_{ii}$ for some $i$.

\item How to realize symmetry fractionalization, symmetry defects, and gauging
with LW models.

\end{enumerate}

\end{document}